\def\DIFS    {\mbox{\rm DIFS}}
\def\SIFS    {\mbox{\rm SIFS}}
\newtheorem{theorem}{Theorem}
\begin{document}


\title{Modeling, Analysis and Impact of a Long Transitory Phase in Random Access Protocols}

\author{Cristina Cano and David Malone
\thanks{C. Cano and D. Malone are with the Hamilton Institute,
National University of Ireland, Maynooth,
Co. Kildare, Ireland.
E-mails: \{cristina.cano,david.malone\}@nuim.ie}
}

\maketitle

\begin{abstract}

In random access protocols, the service rate depends on the number of stations with a packet buffered for transmission. We demonstrate via numerical analysis that this state-dependent rate along with the consideration of Poisson traffic and infinite (or large enough to be considered infinite) buffer size may cause a high-throughput and extremely long (in the order of hours) transitory phase when traffic arrivals are right above the stability limit. We also perform an experimental evaluation to provide further insight into the characterisation of this transitory phase of the network by analysing statistical properties of its duration. The identification of the presence as well as the characterisation of this behaviour is crucial to avoid misprediction, which has a significant potential impact on network performance and optimisation. Furthermore, we discuss practical implications of this finding and propose a distributed and low-complexity mechanism to keep the network operating in the high-throughput phase. 

\end{abstract}

\begin{IEEEkeywords}
Stability, random access protocols, mean field analysis, decoupling approximation, DCF, Aloha, Homeplug.
\end{IEEEkeywords}

\IEEEpeerreviewmaketitle

\section{Introduction}\label{sec:intro}


A common characteristic of random access protocols is the provision of a service rate that is dependent on the actual number of stations with a packet pending for transmission (backlogged stations). We show in this work that this state-dependent service rate, in combination with exponentially distributed packet interarrivals, may cause an extremely long transitory period (of the order of magnitude of hours) under certain conditions. In particular, when the queue length of the stations is large enough to be considered infinite and we operate right after the stability limit of the network. Consider, for example, a network formed by $50$ nodes contending for the channel using Homeplug 1.0 Medium Access Control (MAC) \cite{cano2013PLCmodel}. Fig. \ref{fig:instantaneous_tp_homeplug} shows the instantaneous throughput (measured every second). Observe the long time during which the network remains in a high-throughput phase and how the behaviour of throughput suddenly changes.


To illustrate why this effect takes place, consider a set of nodes with no previous packets buffered for transmission generating packets at a rate slightly higher than the maximum rate the network could serve in saturation. In this situation, the probability that a large percentage of the nodes contend for the channel at the same time is small. Thus, compared to the case in which all the stations are backlogged, the time to transmit a packet is reduced as the conditional collision probability is smaller. Consequently, the probability that a large number of packets accumulate for transmission is also reduced and higher throughput than that achieved when all stations are saturated can be obtained. However, after the stability limit of the network and under infinite queue size, this situation cannot be maintained in the long run. Eventually, the number of backlogged stations increases and so does the time to transmit a packet, leading to a consecutive increase of the queued packets and moving the network to 
saturation (the stable operating point). We show in this work that the time it takes the network to reach the long-term stability can be extremely long, though this depends on, among other parameters, the value of the packet generation rate.


We build upon previous literature on mean field and queue stability analysis of random access protocols and provide a numerical evaluation to demonstrate the potential extremely long duration of this transitory phase as well as the conditions under which it may occur. Moreover, we provide more insight into the characterisation of its duration by means of experimental evaluations. As far as we know, this is the first work that demonstrates and characterises this transitory phase, crucial to avoid misprediction of performance results. The specific contributions are as follows:

\begin{figure}[tb] 
\centering
\includegraphics[width=2.345in]{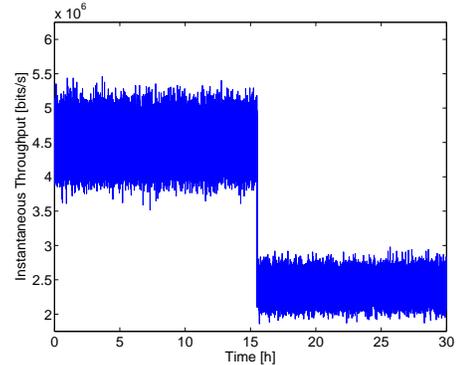} 
\caption{Evolution of instantaneous throughput in Homeplug MAC from \cite{cano2013PLCmodel}.}
\label{fig:instantaneous_tp_homeplug}
\end{figure}

\begin{enumerate}
 \item A demonstration via numerical analysis that in a coupled system of queues of length long enough to be considered infinite there may be a long transitory phase when the packet arrival rate is slightly higher than the maximum load that the system could serve in saturation. We also perform an extensive experimental evaluation to support this finding.
 \item Further insight into the statistical properties of the duration of this transitory phase.
 \item We contribute further understanding of random access protocols with the aim of avoiding misprediction of results that have a significant potential impact on performance evaluation, parametrisation and optimisation.
 \item We discuss practical implications of our findings and provide a simple and distributed mechanism to keep the network in the high-throughput phase, which provides substantial gains in network performance. 
\end{enumerate}

The remainder of this article is organised as follows. In Section \ref{sec:motivation}, we motivate this work by providing an overview of related work as well as previous results affected by misprediction. Then, in Section \ref{sec:identifying}, we build upon previous literature to demonstrate that the high-throughput phase is transitory and describe the challenges involved in its detection, both analytical and experimental. After that, we describe the methodology used in this work to both \emph{i)} demonstrate the transitory phase may have a potentially long duration and \emph{ii)} provide more insight into its statistical properties. Results are presented in Section \ref{sec:results}. Then, a discussion on practical implications, including a mechanism to keep the network operating in the transitory phase, is provided in Section \ref{sec:implications}. Finally, we conclude the paper with some final remarks.

\section{Motivation}\label{sec:motivation}


We have detected the long transitory period in a range of random access protocols, including Aloha and the Distributed Coordination Function (DCF) defined in IEEE 802.11 \cite{IEEE80211-IEEESTD1999} (as will be shown in Section \ref{sec:results}) as well as the Homeplug channel access procedure \cite{HomeplugStd} (see Fig. \ref{fig:instantaneous_tp_homeplug} and also \cite{cano2013PLCmodel}). A common feature of these protocols is that the time to transmit a packet depends on the actual number of contending stations (those stations that are associated to the network and that have a packet pending for transmission). When the traffic arrival rate is just above the level at which we expect the network to be stable, we find that this variability in service rate combined with the variability in traffic arrivals can lead to a high-throughput long-transitory phase before the queues saturate. However, under infinite, or large enough to be considered infinite, queue size, the queues will 
eventually 
become unstable. In these conditions, a larger number of stations contend for the channel reducing throughput to that found in saturation. The possibility of a long transitory phase for random access protocols in this regime has already been postulated \cite{suleiman2008impact}, however without experimental findings or proof. The high-throughput phase was also described in \cite{cao2010modeling}. The authors argued that the network is unstable in this regime and showed experimentally the change of behaviour. Here we go further and demonstrate that the phase that provides the saturation throughput is, in fact, the only stable operation of the network. Thus, the high-throughput phase corresponds to a transitory period. Moreover, we demonstrate and characterise its potentially long duration.


We first found the long transitory behaviour in Homeplug MAC \cite{cano2013PLCmodel}. In that work, we experimentally show the evolution of the instantaneous throughput and find that after a long period of time, the queues start to fill with packets and the throughput faces a sharp decrease to the saturation throughput. We also discuss that analytical models based on the common approximation of decoupling queue states and service times are not able to differentiate between the transitory and the stable solutions and that care should be taken when performing experimental evaluations (for completeness, we will briefly address these problems in Section \ref{sec:identifying}). One of the contributions in \cite{cano2013PLCmodel} is to show that previous results under these conditions \cite{chung2006performance} are incomplete, as their simulation and analytic model results correspond to the network performance of the transitory phase.

In this work we show that this effect is not Homeplug-specific but common to many random access protocols. We demonstrate that the high-throughput phase is transitory and that its duration may be extremely long (see Sections \ref{sec:identifying} and \ref{sec:results}, respectively). We also give more insight into the magnitude and distribution of the time it takes the network to reach the stable phase (see Section \ref{sec:results}). We believe these contributions are crucial to avoid misprediction of results, both from analysis and simulation, that have a potentially significant impact on performance evaluation and comparison, configuration of network channel access parameters and in optimisation analysis. Misprediction of results can lead to wrong conclusions with a clear impact on network performance, especially in this particular case, as the transitory is a higher-throughput phase compared to the stable solution. Moreover, the difference in throughput among the transitory and stable phases, depending 
on 
channel access parameters and 
conditions, can be extremely large (more than a $50\%$ 
reduction in some scenarios). These findings make the detection of such misprediction in performance evaluation even more relevant.

A clear example of such misprediction in IEEE 802.11 can be found in \cite{pitts2008analysing}, where an analytical model based on the decoupling approximation and infinite queue size is proposed. Higher throughput than the one in saturation is predicted when, for the conditions specified, $50$ and $100$ nodes are contending for the channel and the packet arrival rate is approaching saturation. However, we can observe disagreement in that region when comparing their analytical and experimental results (see Fig. 6 in \cite{pitts2008analysing}). In particular, the throughput found from simulations for the higher packet arrival rates in that region corresponds to neither the saturated throughput nor the highest throughput phase, an effect we believe is caused by averaging results from the transitory phase and the stable operation. 

\section{Transitory and Stable Operation of the Network}\label{sec:identifying}

In this section, we first build upon previous literature to show that the lowest-throughput phase corresponds to the stationary operation of the network. Then, we discuss the challenges involved, both analytical and experimental, in identifying the long-term stable solution of the network. We provide techniques to overcome these challenges in order to obtain a valid performance prediction, also from analytical and experimental point of views. 

\subsection{Stability Analysis} 


Building upon the results in \cite{borst2008stability}, where the queue stability of a system of parallel, coupled queues with infinite buffer size is studied, we demonstrate that the stable (stationary) operation of the network corresponds to saturation, as follows.

\begin{theorem}

Let $X = (X_1, ..., X_N)$ be the queue length process of a system of $N$ parallel and homogeneous queues with infinite buffer sizes, strictly positive and Poisson-distributed arrival rates ($\lambda_i$, with $1 \leq i \leq N$ denoting a given queue) and bounded state-dependent service rates $\mu_i(x)$. If $\exists i$ that satisfies:

\begin{equation}\label{eq:condition1_proof}
\lambda_i>\limsup_{x_i \rightarrow \infty} \sup_{x_j:j\neq i} \mu_i(x).
\end{equation}

Then, independently of the initial state: 

\begin{equation}\label{eq:condition2_proof}
\min_i(x_i(t))_{t \rightarrow \infty} > 0.
\end{equation}

\end{theorem}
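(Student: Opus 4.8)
The plan is to prove the stronger statement that each coordinate diverges, $X_i(t)\to\infty$ almost surely, from which \eqref{eq:condition2_proof} is immediate: if every queue length tends to infinity then $\min_i X_i(t)\to\infty>0$ irrespective of the initial configuration. The first step is to turn the hypothesis \eqref{eq:condition1_proof} into a uniform drift estimate for the offending queue. Setting $L:=\limsup_{x_i\to\infty}\sup_{x_j:j\neq i}\mu_i(x)$ and $\epsilon:=(\lambda_i-L)/2>0$, the definition of the limit superior furnishes a threshold $M$ with the property that $\mu_i(x)<\lambda_i-\epsilon$ for every state $x$ with $x_i>M$. The crucial point is that the inner supremum over $(x_j)_{j\neq i}$ in \eqref{eq:condition1_proof} makes this bound hold uniformly in the other coordinates, so that whenever $X_i(t)>M$ the arrival rate $\lambda_i$ exceeds the departure rate by at least $\epsilon$ no matter what the remaining queues are doing; boundedness of the service rates guarantees the process is non-explosive and hence well defined.

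I would then localise a transience (Foster--Lyapunov) argument to the region $\{x_i>M\}$. Put $r:=(\lambda_i-\epsilon)/\lambda_i\in(0,1)$ and consider the bounded nonnegative function $V(x):=r^{x_i}$, which depends only on the $i$-th coordinate. Applying the generator $\mathcal{L}$ of the Markov process, the transitions of the other queues change $x_j$ but not $x_i$, so they leave $V$ unchanged and drop out entirely --- this is precisely what allows the coupling between the queues to be side-stepped. Only the unit arrivals and departures of queue $i$ survive, giving $\mathcal{L}V(x)=r^{x_i}\bigl(\lambda_i(r-1)+\mu_i(x)(r^{-1}-1)\bigr)$. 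Since $\lambda_i(r-1)=-\epsilon$ while $\mu_i(x)(r^{-1}-1)<\epsilon$ on $\{x_i>M\}$ by the drift bound, the bracket is strictly negative there. Hence $V(X(t\wedge\tau_M))$ is a bounded nonnegative supermartingale, where $\tau_M$ denotes the first time the $i$-th coordinate returns to level $M$.

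Transience now follows from optional stopping. Starting from $x_i(0)=n>M$ one obtains $r^{M}\,\mathbb{P}(\tau_M<\infty)\le V(x(0))=r^{n}$, i.e.\ $\mathbb{P}(\tau_M<\infty\mid x_i(0)=n)\le r^{\,n-M}\to0$ as $n\to\infty$. Because $X_i$ moves by unit steps, a standard recurrence/transience dichotomy (returns to level $M$ from successively higher excursions have geometrically summable probabilities, so by Borel--Cantelli only finitely many occur) upgrades this into $X_i(t)\to\infty$ almost surely. Finally I invoke homogeneity: since all queues share the same arrival rate and the same symmetric service-rate function, the quantity $L$ is the same for every index, so the single hypothesis \eqref{eq:condition1_proof} for one $i$ in fact holds for all of them, and the argument above applies verbatim to each coordinate. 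All coordinates may diverge simultaneously without contradiction --- when every queue is large the system is saturated and each per-queue service rate is at most $\lambda_i-\epsilon<\lambda_i$, so the positive drift persists --- and therefore $\min_i X_i(t)\to\infty$, which is stronger than \eqref{eq:condition2_proof}.

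I expect the main obstacle to be the passage in the third step from the vanishing return probability to almost-sure divergence. The favourable drift is only available above the threshold $M$; below it the service rate may well exceed $\lambda_i$ and pull the queue back, so $X_i$ is not a genuine transient random walk on the whole half-line. The work lies in controlling the excursions that dip below $M$ and showing, via the dichotomy above, that they cannot recur indefinitely --- exactly the region-dependent drift analysis developed for coupled-rate systems in \cite{borst2008stability}, whose instability results I would cite to make this step rigorous.
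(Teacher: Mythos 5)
Your proposal is correct in outline and takes a genuinely different route from the paper, whose own proof contains essentially no probabilistic argument: the paper simply cites \cite{borst2008stability} for the statement that condition \eqref{eq:condition1_proof} makes the process transient, so that $x_i(t)\to\infty$, and then cites \cite{luo1999stability} for the fact that, under homogeneity, one unstable queue forces all the others to be unstable, which yields \eqref{eq:condition2_proof}. You instead reconstruct the transience result from first principles: the uniform drift bound $\mu_i(x)<\lambda_i-\epsilon$ on $\{x_i>M\}$ (correctly extracted from the $\limsup/\sup$ structure of \eqref{eq:condition1_proof}), the Lyapunov function $V(x)=r^{x_i}$, which is constant in the other coordinates so that all coupling terms vanish from the generator, and optional stopping to get $\mathbb{P}(\tau_M<\infty \suchthat x_i(0)=n)\le r^{n-M}$. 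You also replace the paper's second citation by the direct symmetry observation that homogeneity makes \eqref{eq:condition1_proof} hold for every $i$, so the one-queue argument applies to each coordinate and the finitely many almost-sure events intersect. What you gain is a self-contained proof that exposes exactly where each hypothesis enters (strict positivity of $\lambda_i$, boundedness of $\mu_i$, uniformity in the other coordinates, homogeneity); what the paper gains is brevity, at the price of deferring all the work to the literature.

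Two points in your third step need repair before it is rigorous. First, the picture of ``returns to level $M$ from successively higher excursions'' with ``geometrically summable probabilities'' is not quite right: since queue $i$ moves by unit steps, every upward excursion starts at $M+1$, so the return probabilities do not decay along the sequence of excursions. The correct statement is that each excursion fails to return with probability at least $1-r>0$ uniformly, so by the strong Markov property the number of returns is dominated by a geometric random variable and is almost surely finite. Second, you must exclude the possibility that, after some return, the process lingers in $\{x_i\le M\}$ forever. This follows from hypotheses you already invoked: arrivals to queue $i$ occur at constant rate $\lambda_i>0$ while its departure rate is bounded by $\sup_x\mu_i(x)<\infty$, so each jump of queue $i$ is an arrival with conditional probability bounded away from zero regardless of the other queues, and a conditional Borel--Cantelli argument produces $M+1$ consecutive arrivals, hence an exit above level $M$, almost surely. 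With these two repairs the dichotomy closes and your proof no longer needs to lean on \cite{borst2008stability} at all; even as written, deferring that one step to the citation leaves you no worse off than the paper, whose proof consists entirely of such citations.
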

 
\begin{proof}Such a process is shown to be transient in \cite{borst2008stability} if Eq. \ref{eq:condition1_proof} is satisfied. Thus, it follows that $x_i(t)$ tends to go to infinity. Therefore, with $t \rightarrow \infty$ the number of packets waiting for transmission at queue $i$ will be higher than zero. Moreover, assuming homogeneity, having an unstable queue in the system implies that all the other queues are also unstable \cite{luo1999stability}. Thus, the condition in Eq. \ref{eq:condition2_proof} holds by applying the same proof for all $N$ queues. In such conditions the operation of the network corresponds to saturation (all nodes have at least a packet pending for transmission).
\end{proof}

\subsection{Identifying the Stable Operation: Analytical Challenges}\label{sec:identifying_decoupling}


When analysing the performance of a network formed by a set of nodes using a random access protocol, the decoupling approximation is commonly used in order to make the analysis tractable. Under the decoupling approximation each queue is modelled as independent of the rest of queues in the network. However, this assumption, although practical under a range of circumstances, can provide two different solutions for throughput in certain regimes \cite{Duffy2010}. In particular, under the conditions we have detected the long transitory phase: \emph{i)} when the packet arrival rate is slightly higher than the maximum load that the system could serve in saturation and \emph{ii)} under infinite (or large enough to be considered infinite) buffer size. These models involve solving a fixed point equation, however, in contrast to saturated models, the solution may not be unique in models that consider unsaturated conditions. 

The fundamental limitation of these analytical models is that they do not consider the number of instantaneous contending stations, i.e., the number of queues that have at least a packet buffered at the same time. Neglecting this fact makes it impossible to capture the actual regime in which the queues operate. This is caused by the possibility of facing two extreme cases: the queues being mostly empty or saturated conditions. 
This effect is in fact related to the transitory phase, the system operates in a high-throughput phase, that corresponds to one of the solutions the analytical model converges to, and then, moves to the saturated network operation, which corresponds to the other solution. The relation between the packet arrivals rates for which we observe two different solutions and so the potential for a long transitory 
phase is studied in this section.

For the two best-known random access protocols, Aloha and DCF, we illustrate here how analytical models based on the decoupling approximation may exhibit two solutions by varying the initial conditions of our iterative solver. Furthermore, we use these analytical models to obtain the service rate in saturation (the stability limit of the network), a metric that allows us to identify when there is a potential for the long transitory phase to take place. Using \emph{Theorem 1}, we define the stability condition as $\lambda < \mu_{\rm sat}$, where $\mu_{\rm sat}$ denotes the service rate when $N$ stations are contending for the channel. When the stability condition of the network is not satisfied, there is potential to both \emph{i)} obtain different solutions from the analysis and \emph{ii)} find a long transitory phase from one solution to the other. However, note that when $\lambda >> \mu_{\rm sat}$, we expect the probability of finding a transitory period to decrease as the probability of having a large 
number of the nodes simultaneously contending for the channel increases.

\subsubsection{Aloha}

We take a standard renewal reward approach \cite{kumar05,bianchi05} to model the network performance of an Aloha network. The analytical model used is described in Appendix \ref{appendix:aloha}. To make comparison with DCF easier, parameters shown in Table \ref{tbl:parameters} are used. The contention window ($W$) is set to $32$ and the data payload ($L$) is set to $1500$ bytes. Results for the aggregate throughput for different number of nodes ($N$) and packet arrival rates ($\lambda$), as well as the service rate in saturated conditions for different values of $N$, are depicted in Fig. \ref{fig:renewal_reward_aloha}. 

Observe that, when $\lambda$ is slightly higher than the maximum service rate in saturation for a given $N$ (shown in Fig. \ref{fig:renewal_reward_aloha_mu}), the analysis converges to two different solutions for throughput (Fig. \ref{fig:renewal_reward_aloha_tp}). The solution labelled as \emph{Analysis 1} is obtained by using initial conditions representing saturation for the numerical method, i.e.,: \emph{i)} number of idle slots between renewal events equal to $0$, \emph{ii)} queue occupancy equal to $1$, \emph{iii)} probability of finding the queue non empty after a transmission equal to $1$ and \emph{iv)} transmission attempt rate set to a high value ($0.5$). On the contrary, the solution labelled as \emph{Analysis 2} considers the opposite case (lightly-loaded initial conditions for the numerical method): \emph{i)} number of idle slots between renewal events equal to a large value ($1000$), \emph{ii)} queue occupancy equal to $0$, \emph{iii)} probability of finding the queue non empty after a transmission equal to $0$ 
and \emph{iv)} 
transmission 
attempt rate set to a small value ($1\cdot10^{-5}$). Note that both solutions represent a fixed point of the model.

The range of packet arrival rates for which we obtain the two solutions increases with the number of nodes in the network. The reasons behind this result are twofold. First, with a small number of nodes, the probability (independently of the service rate) that all of them have a packet for transmission at the same time is higher. Second, with higher $N$, there is a higher range of possible operating points before reaching the saturated service rate. Observe that, under these specific conditions, when $N=10$ there is no discrepancy among the two different solutions and thus, we do not expect the long transitory phase to occur. 

\begin{table}[tb!]
\centering
\caption{System parameters of the IEEE 802.11b specification \cite{IEEE80211-IEEESTD1999}.}\label{tbl:parameters}
\begin{tabular}{|c||c|} \hline
Parameter & Value in IEEE 802.11b\\ \hline
$R_{\rm data}$ & $11$~Mbps \\ \hline
$R_{\rm basic}$/$R_{\rm PHY}$ & $1$~Mbps \\ \hline
$L_{\rm MACH}$ & $272$~bits \\ \hline
$L_{\rm PLCPPre}$ & $144$~bits \\ \hline
$L_{\rm PLCPH}$ & $48$~bits \\\hline
$L_{\rm ack}$ & $112$~bits \\ \hline
$\sigma$ & $20$~$\mu$s \\ \hline
$\rm{DIFS}$ & $50$~$\mu$s \\ \hline
$\rm{SIFS}$ & $10$~$\mu$s \\ \hline
\end{tabular}
\end{table}

\begin{figure*}[!tb] 
\centering
\subfigure[Throughput]{\includegraphics[width=2.3in]{s_vs_n_unsat}\label{fig:renewal_reward_aloha_tp}}
\subfigure[Saturated Service Rate]{\includegraphics[width=2.3in]{mu_sat}\label{fig:renewal_reward_aloha_mu}}
\caption{The two solutions obtained from the renewal reward analysis for the aggregated throughput and service rate in saturated conditions in Aloha.}
\label{fig:renewal_reward_aloha}
\end{figure*}

\begin{figure*}[!tb]
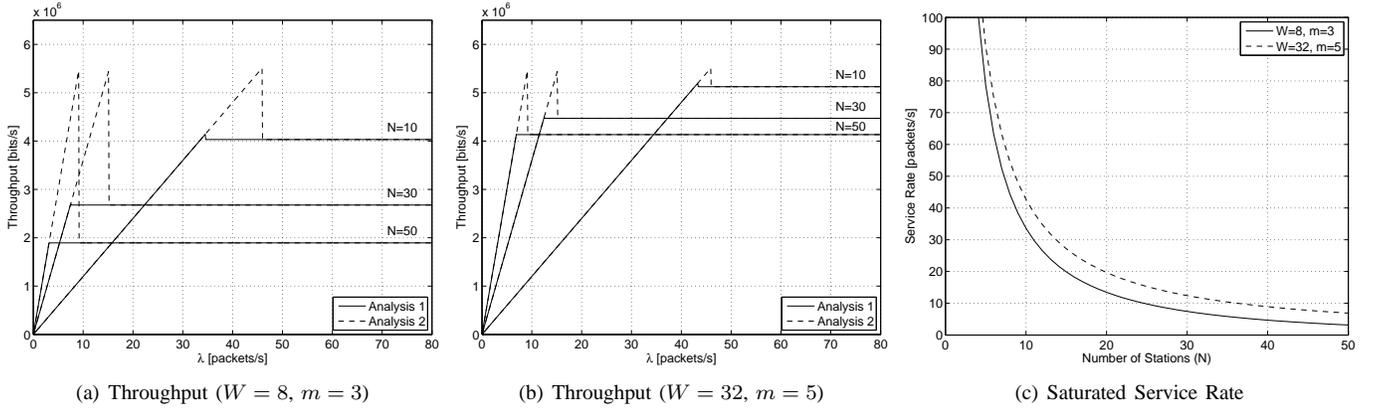
 
\centering
\subfigure[Throughput ($W=8$, $m=3$)]{\includegraphics[width=2.3in]{s_vs_n_unsat_cw8_m3}\label{fig:renewal_reward_dcf_tp_cw8_m3}}
\subfigure[Throughput ($W=32$, $m=5$)]{\includegraphics[width=2.3in]{s_vs_n_unsat_cw32_m5}\label{fig:renewal_reward_dcf_tp_cw32_m5}}
\subfigure[Saturated Service Rate]{\includegraphics[width=2.4in]{mu_sat_dcf}\label{fig:renewal_reward_dcf_mu}}
\caption{The two solutions obtained from the renewal reward analysis for the aggregated throughput and service rate in saturated conditions in DCF.}
\label{fig:renewal_reward_dcf}
\end{figure*}

\subsubsection{DCF}

We have also studied the effect of obtaining two different solutions from the analysis for a network with nodes using the DCF random access protocol. The renewal reward approach presented in Appendix \ref{appendix:dcf} has been used for this purpose. Fig. \ref{fig:renewal_reward_dcf} shows the aggregated throughput as well as the service rate in saturated conditions for two different combinations of the contention parameters. We have analysed the cases in which the minimum contention window ($W$) is set to $8$ and $32$ and the number of backoff stages ($m$) is equal to $3$ and $5$. Thus, with maximum contention windows being $64$ and $1024$, respectively (for full details of IEEE 802.11 DCF see \cite{IEEE80211-IEEESTD1999}). The different solutions labelled as \emph{Analysis 1} and \emph{Analysis 2} have been obtained as in the last subsection. 

Also, as found for Aloha, when the packet arrival rate is immediately higher than the service rate in saturation (depicted in Fig. \ref{fig:renewal_reward_dcf_mu}), there is a potential for obtaining two solutions from the analysis. We observe in this case that the range of packet arrival rates for which two solutions appear is higher with $W=8$ and $m=3$ than for $W=32$ and $m=5$ (Fig. \ref{fig:renewal_reward_dcf_tp_cw8_m3} and \ref{fig:renewal_reward_dcf_tp_cw32_m5}). This is caused by the larger difference between the maximum throughput that can be achieved with a reduced number of backlogged stations and the saturation throughput. Note how the saturation throughput with the smaller contention window and number of backoff stages is much lower than when the contention window is larger.

Observe also that, despite our efforts to choose the initial values for the solver to produce two solutions, only one solution is obtained when the two solutions are very close. This can be observed in Fig. \ref{fig:renewal_reward_dcf_tp_cw8_m3} and \ref{fig:renewal_reward_dcf_tp_cw32_m5} for $N=10$ when the packet arrival rate is close to the maximum service rate the system could serve in saturation. However, 
the existence of two different solutions for throughput in those conditions has been confirmed with the methodology used in \cite{Duffy2010}. 

In conclusion, it is crucial to know the two-solution existence when interpreting results from the analysis in order to avoid misprediction. In fact, this is the reason for the incorrect predicted long-term performance of IEEE 802.11 presented in \cite{pitts2008analysing} and the Homeplug MAC in \cite{chung2006performance}. The decoupling approximation along with a model exhibiting two solutions resulted in incomplete 
results from the 
analysis. Furthermore, in \cite{chung2006performance}, agreement was found comparing the performance obtained from the analysis with simulations, leading the authors to believe the analytical model was validated. Given the long transitory phase, care must be also taken when running an experimental evaluation as we will discuss later in this section. Thus, verifying whether the stability condition is satisfied comparing the maximum service rate (Figs. \ref{fig:renewal_reward_aloha_mu} and \ref{fig:renewal_reward_dcf_mu}) and the actual packet arrival rate ($\lambda$) is crucial to guarantee that outcomes correspond to long-term network operation.

\subsubsection{Previous Efforts on Modelling Coupled Dynamics}

Despite the vast majority of analytical models of random access protocols being based on the decoupling approximation, some authors have already pointed out the inaccuracy of this assumption and have made efforts on trying to model the coupled dynamics of the system of queues (e.g. \cite{garetto2005performance,panda2009state}). The main problem in considering the number of packets at each buffer and modelling the coupled behaviour remains in the large resulting state space. Keeping track of the number of packets buffered for transmission at each queue translates into a state space of $Q^N$, where $Q$ denotes the maximum length of the queues and $N$ the number of stations. Consequently, even for small queue lengths the complexity of the system is considerably higher than modelling the network assuming that the decoupling approximation holds. Furthermore, as $Q$ tends to infinity, the analysis becomes intractable. We will describe this issue in more detail in Section \ref{sec:methods}. 

A state-dependent model that considers the actual number of contending stations is proposed in \cite{garetto2005performance}. The analysis tracks the queue evolution of a tagged station and approximates the probability that any of the other non-tagged stations find themselves with an empty buffer after a given transmission. The approximation taken in that work considers that this probability depends only on the number of competing stations and is derived from the buffer occupancy probability of the tagged station (assuming the backoff stage of the non-tagged station differs at most by one). Based also on a state-dependent service rate approach, an analytical model with reduced complexity is proposed in \cite{panda2009state}. However, as opposed to \cite{garetto2005performance}, the stationary distribution is used to derive the probability that a departure from a non-tagged queue leaves the queue empty. 

In this work, we will also apply the state-dependent service rate approach to model the duration of the transitory period in Section \ref{sec:methods}. However, we will define a simpler approximation to compute the probability of stations having an empty queue after a packet transmission. As will be shown in Section \ref{sec:methods}, we consider the queue occupancy probability after a packet transmission to be only dependent on the current state (number of backlogged stations). This assumption will prove adequate to demonstrate the transitory phase being of long duration.

\subsection{Identifying the Stable Operation: Experimental Challenges}


When $\lambda >> \mu_{\rm sat}$, the system rapidly moves to the stable solution. However, performing an experimental evaluation right after the stability limit of the network can provide wrong results as the length of the transitory phase can be extremely long (see Fig. \ref{fig:instantaneous_tp_homeplug} in the introductory section). If the experiments are started with the queues empty, it can take a long time to reach the stable solution since the system has to reach a point at which a large number of nodes are simultaneously contending for the channel and start to have their queues filled with an increasing number of packets.  


One way to obtain the stable results is to start with the queues empty, run the experiments for a long time until the system changes to the stable solution and then start taking the statistics of the performance metrics of interest. Note that, due to the extremely long duration of the transient, the statistics from the transitory phase must be discarded in order to obtain the valid performance results even when setting a long time horizon. However, we suggest a more practical way to force the system to enter into the stable operation: to start the experiments with a number of packets preloaded in the queues. If the queues are unstable, we have started the experiments closer to the stationary regime, and will see the long-term throughput more quickly. This technique is based on the recommendation to set the initial conditions to those in steady-state proposed in \cite{wilson1978evaluation}. Method shown to be more effective in estimating the steady-state mean if compared to truncation (discarding the 
transients caused by initial conditions). The drawback of this technique is that, if the queues are stable, there is a transitory phase during which those extra packets are released. However, this technique proved useful in predicting the performance even in unsaturated conditions in \cite{cano2013PLCmodel}. Thus, demonstrating that, although care has to be taken when selecting the amount of packets to preload queues and the time horizon, nodes are able to release these packets in a reasonable amount of time.

\section{Methods and Metrics to Assess the Length of the Transitory Period}\label{sec:methods}

In this section, we describe the methods and metrics selected to get more insight into the duration and distribution of the transitory phase. For each method we will define which metrics of interest can be obtained in order to estimate the value of interest, i.e., the instant at which the transitory phase ends. Results will be presented in Section \ref{sec:results}.

\subsection{Evaluation Methods}

The evaluation methods considered in this work are: \emph{i)} a queueing system with coupled service rates, \emph{ii)} an analytic model of the number of backlogged stations and \emph{iii)} a network simulation framework. We next describe them in detail and discuss their complexity and accuracy.

\subsubsection{Method 1 - Queueing system with coupled service rates}

We model the system of $N$ parallel queues as a discrete Markov process in $\mathbb{Z}_{+}^{N}$ in which each state represents the number of packets waiting for transmission at each queue: $X = (X_1, ..., X_N)$. We assume the system of parallel queues to be homogeneous, i.e., same maximum queue length, packet arrival and average service rate at all queues. The number of backlogged stations in a given state $x$ is denoted by $n_x$ and represents the number of queues with at least one packet pending for transmission. We take into account exponentially distributed packet arrivals at rate $\lambda$ packets/s. The service rate ($\mu(n_x)$) is also considered to be exponentially distributed as well as state-dependent. Such dependence on the actual state models the impact of the number of backlogged stations ($n_x$) on performance metrics such as the conditional collision probability and/or the average backoff duration (depending on the access protocol in use). Note that the state-dependent service rate can 
be chosen to match the access protocol of interest (Aloha, DCF, Homeplug MAC, etc.). With all these considerations, the transition 
probabilities among the different states of this process can be defined as:

\begin{align}\label{eq:transitions}
 P(x \mapsto x+e_i \leq q)= & \frac{\lambda}{N\lambda + n_x\mu(n_x)},\nonumber\\
 P(x \mapsto x-e_i \geq 0)= & \frac{\mu(n_x)}{N\lambda + n_x\mu(n_x)},
\end{align}

with the relational operators being element-wise, $q$ and $e_i$ being the all $Q$ and $i$-th unit vectors in $\mathbb{Z}_{+}^{N}$, respectively. The number of backlogged stations in state $x$ is computed as:

\begin{equation}\label{eq:nx}
 n_x = \sum_{i=1}^{N} I(x_i),
\end{equation}

where $I(x_i)$ is the indication function of having at least one packet pending for transmission in queue $i$:

\begin{equation}\label{eq:indication}
 I(x_i) = \left\{ \begin{array}{rll}
1 &   \mbox{if} & x_i > 0, \\
0 &    \mbox{otherwise.} &
 \end{array}\right.
\end{equation}

Although this system provides a close description of the behaviour of the network, the complexity of solving it explicitly is prohibitive. Note that its state space is of the order of $Q^N$ and we are interested in the case in which $Q \to \infty$. Therefore, the system is computationally intractable even considering small $N$ and taking into account that the transition matrix is sparse. However, we can perform Monte Carlo simulations of this process in order to experimentally characterise the duration and distribution of the transitory phase. 

\subsubsection{Method 2 - Modelling the number of backlogged stations}\label{sec:methods_2}

With the goal of simplifying the system of coupled queues described in \emph{Method 1}, we consider now a discrete Markov process in $\mathbb{Z}_{+}$ in which every state represents the number of backlogged stations ($X=n_x$). We consider the queue occupancy probability after a packet transmission independent of the previous states. Thus, the probability that a station that has transmitted a packet remains still backlogged is modelled as the standard queue busy probability: $\rho_x = \lambda/\mu(x)$, where $x$ denotes the current state, i.e., the number of backlogged stations. Using this approximation, we are effectively ignoring the queue length and only considering whether the queue has a packet pending for transmission. Note that $\rho_x$ is, in fact, not memory-less. However, this assumption allows us to reduce the computational complexity of the previous analysis while still preserving information about the queue occupancies, crucial to obtain insight into the transitory phase.

We set the number of states of the Markov Chain to $1+N'$ (from having no backlogged station up to the case in which $N'$ stations are backlogged), where $N'$ is the smallest value of $N$ for which the condition $\lambda < \mu(N')$ is no longer satisfied. Observe that $\rho_x > 1$ when $N'$ stations have a packet pending for transmission.  Thus, we consider the last state to be absorbing, i.e., $P(N' \mapsto N')= 1$. The transition probabilities for $x<N'$ are shown in Eq. \ref{eq:transitions_approx}. 

\begin{align}\label{eq:transitions_approx}
 P(x \mapsto x+1 \leq N')= & \frac{(N-x)\lambda}{(N-x)\lambda + x\mu(x)},\nonumber\\
 P(x \mapsto x-1 \geq 0)= & \frac{x\mu(x)(1-\rho_x)}{(N-x)\lambda + x\mu(x)},\nonumber\\
 P(x \mapsto x)= & \frac{x\mu(x)\rho_x}{(N-x)\lambda + x\mu(x)}.
\end{align}

The state space and complexity of this system is significantly reduced compared to \emph{Method 1} as the need to keep track of the queue occupancy at every queue is removed, reducing the state space to $N'+1$. However, it only provides an approximation of the actual behaviour of the network.

\subsubsection{Method 3 - Network Simulations}

We have also used a network simulator based on the SENSE framework \cite{chen2004sense}. Packet interarrival times are modelled as exponentially distributed while the service rate strictly follows the DCF random access procedure. The results obtained from simulations are the closest to the real behaviour of the network as the assumptions considered are minimal. However, network simulations are time consuming and thus, they are impractical to derive conclusions for a large range of conditions. Network simulations are used in this work to evaluate the accuracy of the outcomes obtained from the previously described methods as well as to obtain results that closely match the actual behaviour of the network.

\subsubsection{Embedded Time vs. Real Time}

Note that from \emph{Method 1} and \emph{Method 2} we obtain number of events (packet arrivals and departures) while in \emph{Method 3} we measure time in seconds. However, we will use Gillespie's stochastic simulation algorithm (\emph{direct method}) \cite{gillespie1977exact} in \emph{Method 1} to compute at each packet arrival and departure the time interval to the next event. 

\subsection{Metrics}

The metrics we have defined aim to provide a close estimation of the time at which the system moves from the transitory phase to the stable behaviour. However, depending on the method, some limitations apply. We describe the different metrics and how they relate to the previously defined evaluation methods next.

\subsubsection{Metric 1 - Hitting Time of a Limiting State}\label{sec:methods_metric1}

Starting with the queues empty, we first consider the instant, on average, at which a limiting state (a state for which the stability condition is not satisfied) is first hit. This metric allows us to track the time elapsed since the network start-up until $N'$ stations are backlogged.



By means of Monte Carlo simulations of the system described in \emph{Method 1} we can easily compute the number of events (packet arrival/departures), on average, to hit one of the limiting states. On the contrary, using \emph{Method 2}, we can obtain the average number of events to hit state $x=N'$ starting from $x=0$ (denoted as $h(0)$) by solving the system of linear equations formed by Eq. \ref{eq:hitting_approx} along with  $h(N') = 0$.

\begin{align}\label{eq:hitting_approx}
\begin{split}
 h(x < N') = 1 & + \frac{(N-x)\lambda}{(N-x)\lambda + x\mu(x)} h(x+1) \\ & + \frac{x\mu(x)(1-\rho_x)}{(N-x)\lambda + x\mu(x)} h(x-1>0) \\ & + \frac{x\mu(x)\rho_x}{(N-x)\lambda + x\mu(x)} h(x). \nonumber\\ 
\end{split} \\ 
\end{align}

This analysis is extremely computationally efficient and thus, can be used to perform an extensive numerical evaluation. However, results are affected by the approximation taken removing the need to track the queue occupancies in the system as described in \emph{Method 2}. Moreover, observe that this metric allows us to get some insight into the duration of the transitory phase but also that having $N'$ backlogged stations does not guarantee that the system moves to the stable phase. If the number of packets in the queues is reduced, there is still some probability that the stations are able to transmit those packets without facing an increase in the number of packets waiting for transmission. Thus, keeping the system in the transitory phase. Therefore, the average number of events to hit state $N'$ is a lower bound of the events necessary to escape from the transitory period. 

To illustrate this fact, we track the queue evolution at every time instant (packet arrival/departure) using Monte Carlo simulations of the system described in \emph{Method 1}. The minimum, average and maximum queue length of $N=50$ nodes using the DCF protocol with $\lambda=7.5$ packets/s for two different simulation runs are depicted in Fig. \ref{fig:q_evolution}. Observe that the queue occupancies remain low during a long time interval until they start to increase to the maximum queue length. We plot the instant at which the limiting state (first $x$ such that $\lambda>\mu(n_x)$) is first reached (vertical line). Note that, in Fig. \ref{fig:q_evolution_1}, this instant coincides with the moment at which the queues start to be filled with packets. However, in Fig. \ref{fig:q_evolution_2}, the system is able to recover from this situation 
and remain 
in the transitory phase for a longer time interval. 

\begin{figure}[!tb] 
\centering
\subfigure[Simulation Run 1]{\includegraphics[width=2.5in]{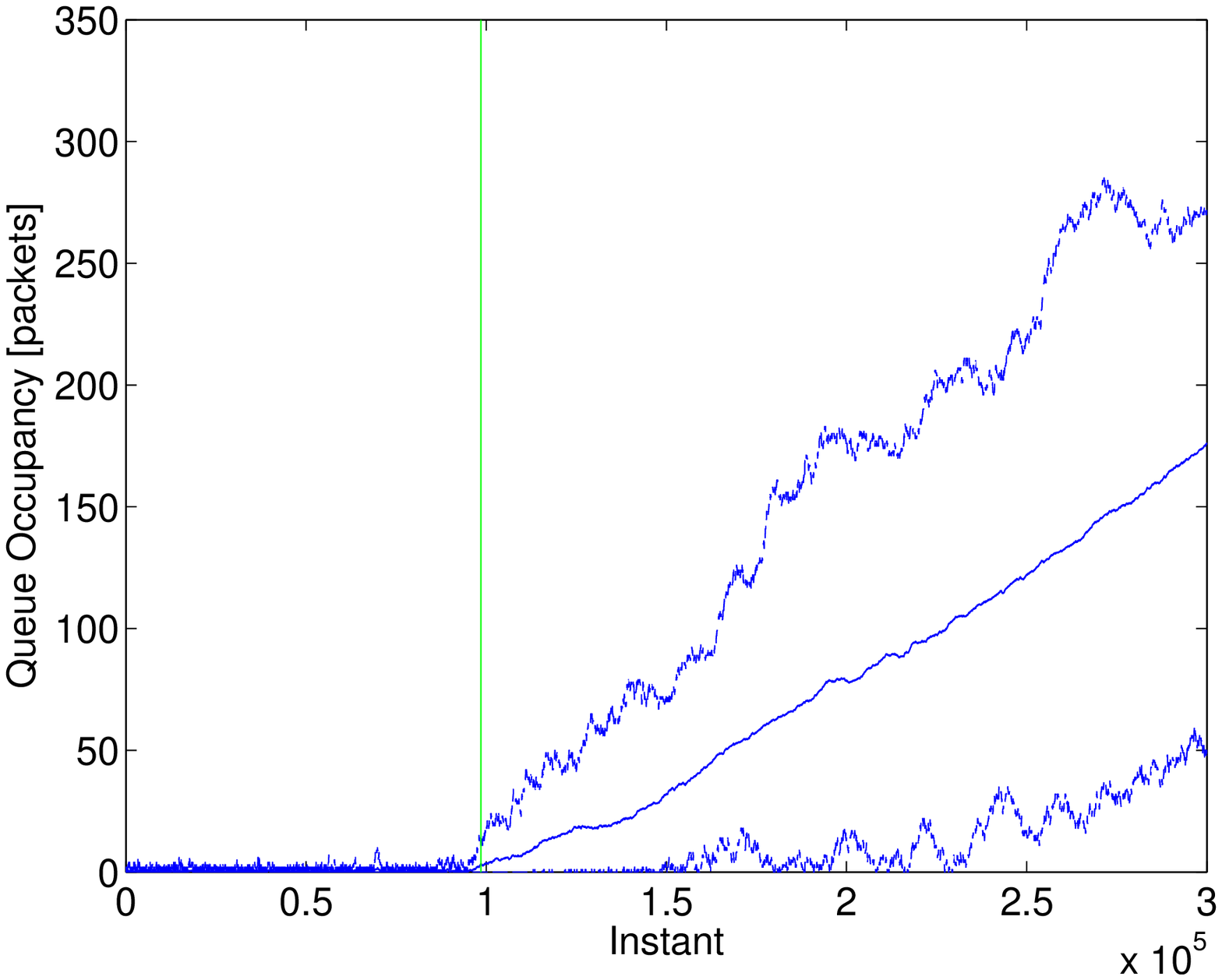}\label{fig:q_evolution_1}}\\
\subfigure[Simulation Run 2]{\includegraphics[width=2.5in]{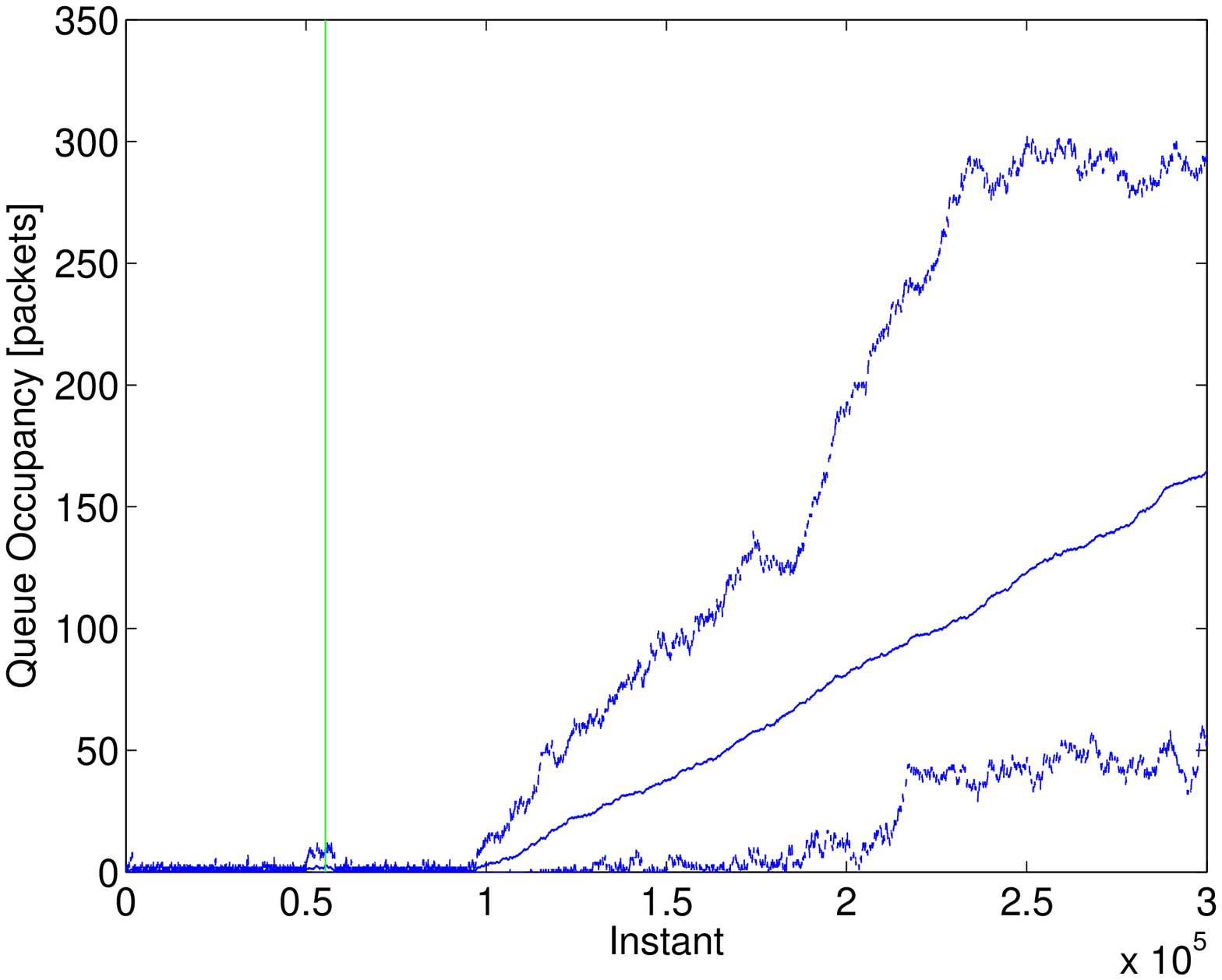}\label{fig:q_evolution_2}}
\caption{Evolution of the queue occupancy (maximum, average, minimum) of the $N$ nodes for two simulation runs in DCF ($\lambda=7.5$ packets/s). Hitting instant of first state such that $\lambda>\mu(n_x)$ also displayed (vertical line).}
\label{fig:q_evolution}
\end{figure}

Observe that in order to improve the prediction of the duration of the transitory phase, we could consider in \emph{Method 2} that after hitting a limiting state there is a certain probability to return to state $N'-1$, to remain in state $N'$ and to enter into absorption. These probabilities are dependent on the number of packets waiting for transmission at the different queues. However, recall that no information on the queue occupancies at each node is maintained in \emph{Method 2}. 

\subsubsection{Metric 2 - Last Instant $N-1$ Stations Backlogged} 

A more accurate metric is to track, after the average queue lengths have exceeded a certain threshold ($\theta$), the last instant at which $N-1$ stations were backlogged:

\begin{equation}\label{eq:TE}
T_{\rm E} = \sup \{t < T_{\rm \theta} : \exists i.x_i(t) = 0 \},
\end{equation}

where $T_{\rm \theta} = \inf\{t > 0 : {\bar{x}(t) > \theta}\}$. After $T_{\rm E}$, we can assume the transitory phase has ended as, from that instant on, $N$ stations will be contending for the channel. Thus, the network behaviour will be the stable (stationary) one given that all queues will have at least one packet pending for transmission, i.e., saturation conditions. Observe that the limitation of this metric is that it can only be assessed through experimental evaluation. Therefore, we can only evaluate this metric running Monte Carlo simulations in the system described in \emph{Method 1} and using \emph{Method 3}. However, this metric allows us to provide accurate results on the duration of the transitory phase and it can be used to asses the accuracy of \emph{Metric 1}.

\section{Results}\label{sec:results}

We present here the results obtained from the previously described evaluation methods. We first describe the outcomes obtained from a numerical analysis based on \emph{Method 2} that allows us to derive conclusions about the duration of the long transitory phase under certain conditions. Then, we perform an experimental evaluation to provide more insight into the duration and distribution of the transitory phase. The accuracy of the considered methods and metrics is also evaluated. 

\begin{figure*}[!tb]
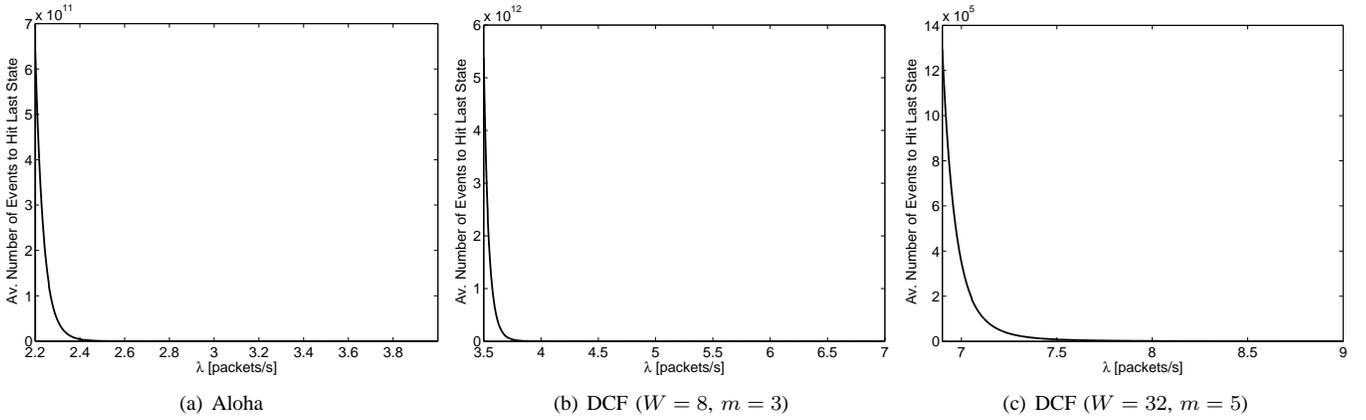
 
\centering
\subfigure[Aloha]{\includegraphics[width=2.3in]{av_hitting_times_aloha}}
\subfigure[DCF ($W=8$, $m=3$)]{\includegraphics[width=2.31in]{av_hitting_times_dcf_cw8_m3}}
\subfigure[DCF ($W=32$, $m=5$)]{\includegraphics[width=2.35in]{av_hitting_times_dcf_cw32_m5}}
\caption{Results of \emph{Metric 1} (average number of events to hit state $N'$ such that $\lambda < \mu(N')$ is no longer satisfied) using \emph{Method 2} (modelling the number of backlogged stations).}
\label{fig:events_to_hit}
\end{figure*}

\subsection{Numerical Analysis}


The numerical analysis using \emph{Method 2} (modelling the number of backlogged stations) is the most efficient approach to obtain insight into the duration of the transitory phase under a range of different conditions. We present in this section the results and conclusions it provides as well as an evaluation of its accuracy by comparing results with \emph{Method 1} (the system with coupled service rates). The former allows us to demonstrate the long duration of the transitory phase and the conditions under which it occurs. The latter provides insight into the limitation of our approximation to model the number of backlogged stations instead of tracking the number of packets at each queue.

\subsubsection{Duration of the Transitory Phase} Here we show via numerical analysis using \emph{Method 2} and \emph{Metric 1} (hitting times to a limiting state) that the duration of the transitory phase is extremely long under certain conditions. We have solved the system of linear equations presented in Section \ref{sec:methods_metric1} considering $N=50$ nodes for Aloha ($W=32$) and the DCF ($W=8, m=3$ and $W=32, m=5$), which are the same parameters used in Section \ref{sec:identifying_decoupling}. The state-dependent service rates ($\mu(x)$) for the different protocols and configurations are obtained from the analytical models presented in Appendix \ref{appendix:aloha} and \ref{appendix:dcf}, respectively, considering saturated conditions. The average number of events to hit state $N'$ for different arrival rates are shown in Fig. \ref{fig:events_to_hit}. Observe that, in all cases considered, as the packet 
arrival 
rate increases, the average number of events to hit state $N'$ tends to zero. On the contrary, for reduced packet arrival rates, it is of a high order of magnitude. Considering that \emph{Metric 1} is a lower bound of the time at which the transitory phase ends, these results demonstrate that for packet arrival rates slightly higher than the maximum rate that can be achieved in saturation ($\mu(N)$), the duration of the transitory phase is extremely long.


\subsubsection{Accuracy of Method 2} Comparing results of \emph{Method 1} (the system with coupled service rates) and \emph{Method 2} (modelling the number of backlogged stations) we can get insight into the accuracy of the approximation considered in \emph{Method 2} to model the probability that a station that has transmitted a packet remains still backlogged. For this purpose we compare results of \emph{Metric 1} (hitting times to a limiting state) for both evaluation approaches. Fig. \ref{fig:accuracy_method1and2} shows the relative error of \emph{Metric 1} using both methods for the different protocols considered and various packet arrival rates. Parameters are as the ones used in Fig. \ref{fig:events_to_hit} while average values in \emph{Method 1} are obtained from $1000$ simulation runs and considering a maximum queue size equal to $1000$ packets. The magnitude of the relative error as well as how it increases with the reduction of the packet arrival rate can be observed in Fig. \ref{fig:accuracy_method1and2}. Although not shown in Fig. \ref{fig:accuracy_method1and2} (as absolute values are depicted), the value obtained for \emph{Metric 1} from \emph{Method 1} is, in all cases evaluated, higher than the one obtained from \emph{Method 2}. Therefore, despite of the large errors, the conclusions derived in the previous subsection hold as results shown in Fig. \ref{fig:events_to_hit} correspond to a lower bound of the actual duration of the transitory phase. Thus, we conclude that while \emph{Method 2} shows a long transitory phase duration, a more detailed queue model is required to estimate its actual length.

\begin{figure}[tb] 
\centering
\includegraphics[width=2.5in]{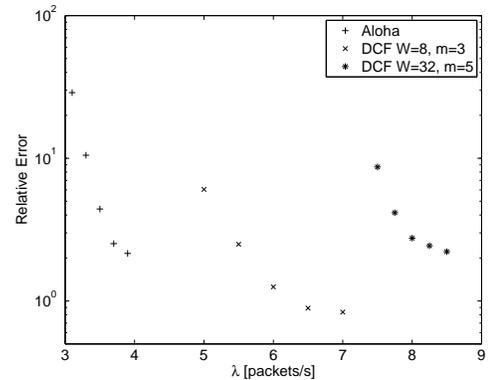} 
\caption{Relative error (ratio) of \emph{Metric1} (hitting times to a limiting state) obtained using \emph{Method 2} (modelling the number of backlogged stations) vs. \emph{Method 1} (the system with coupled service rates).}
\label{fig:accuracy_method1and2}
\end{figure}

\begin{figure*}[!tb] 
\centering
\subfigure[$\lambda=7.5$ packets/s]{\includegraphics[width=2.32in]{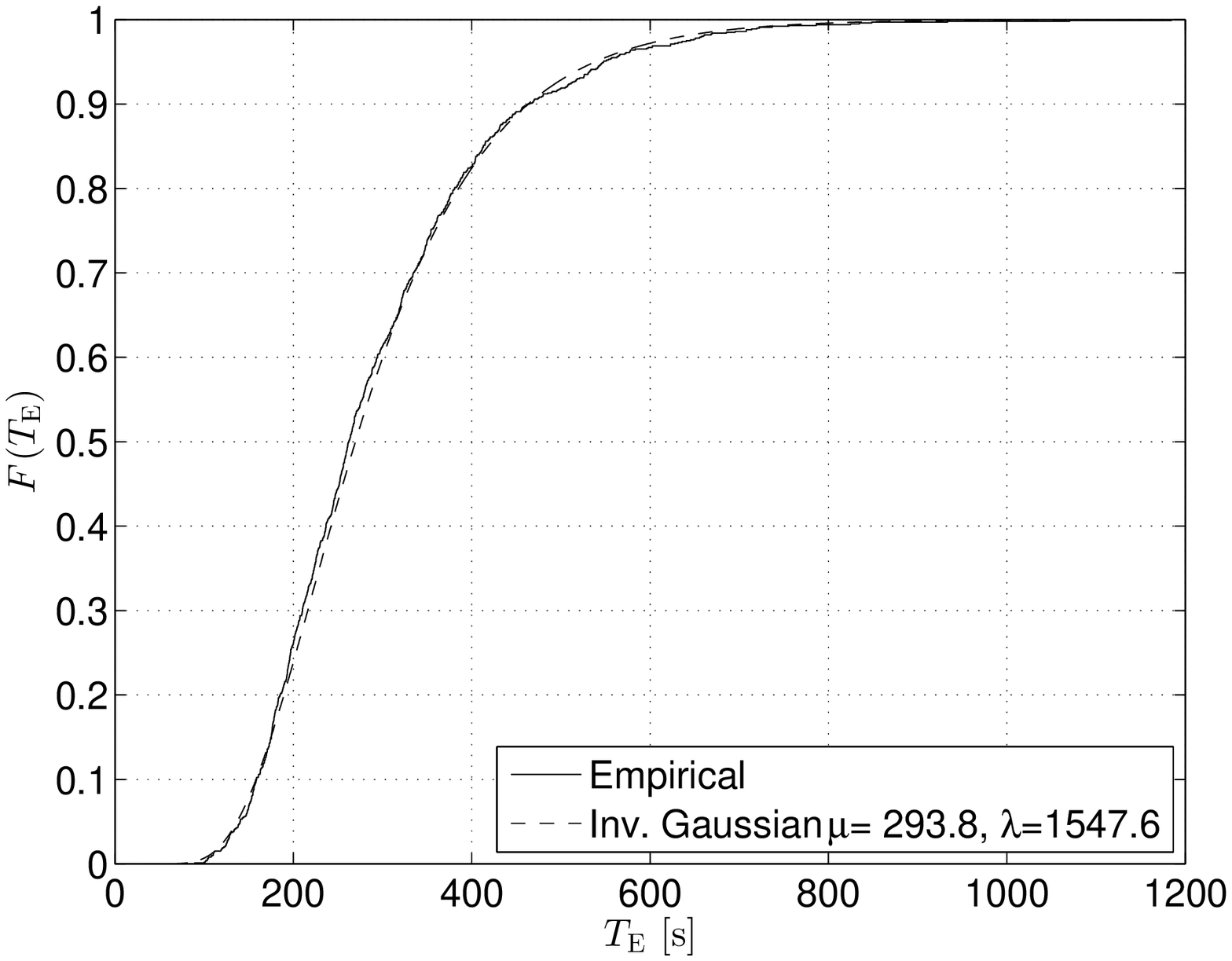}\label{fig:time_lastNminus1_hist_dcf_75}}
\subfigure[$\lambda=7.75$ packets/s]{\includegraphics[width=2.3in]{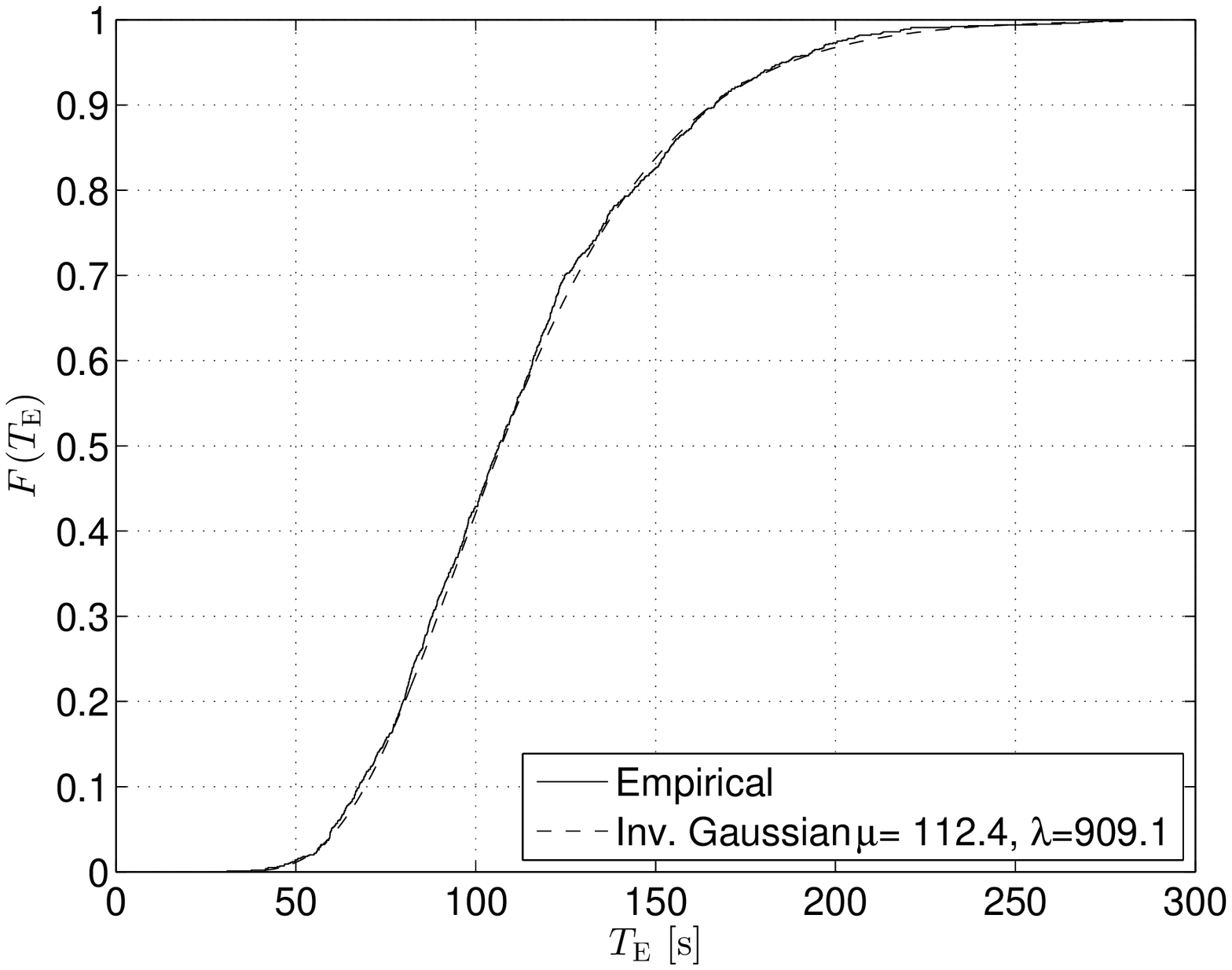}}
\subfigure[$\lambda=8$ packets/s]{\includegraphics[width=2.3in]{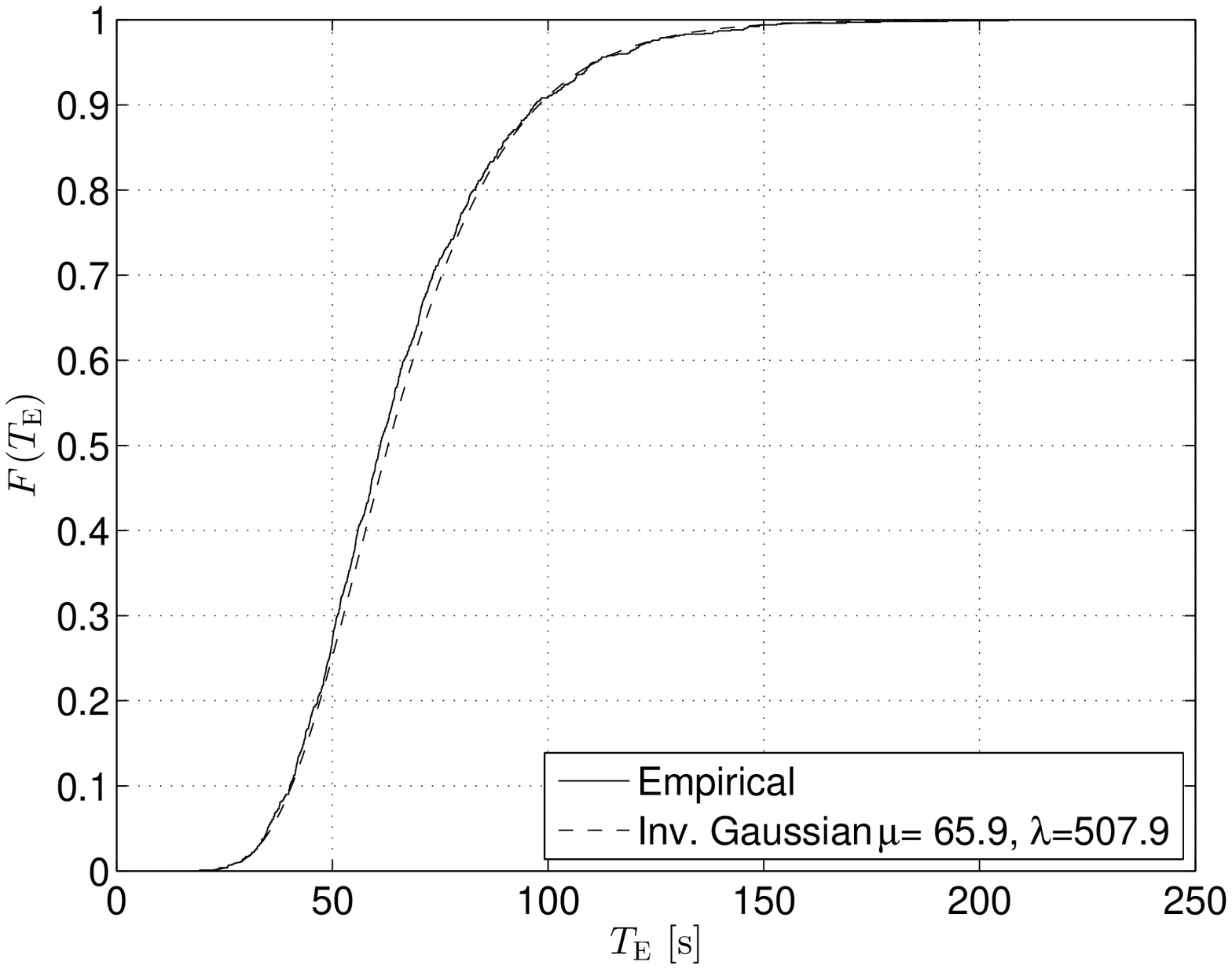}}
\caption{Empirical CDF of \emph{Metric 2} (last instant $N-1$ stations backlogged) using \emph{Method 1} (the system with coupled service rates) in DCF with $W=32, m=5$. Inverse Gaussian distribution with parameters selected to best fit the empirical distribution also depicted.}
\label{fig:time_lastNminus1_hist_dcf}
\end{figure*}

\begin{figure*}[!tb] 
\centering
\subfigure[$\lambda=7.5$ packets/s]{\includegraphics[width=2.25in]{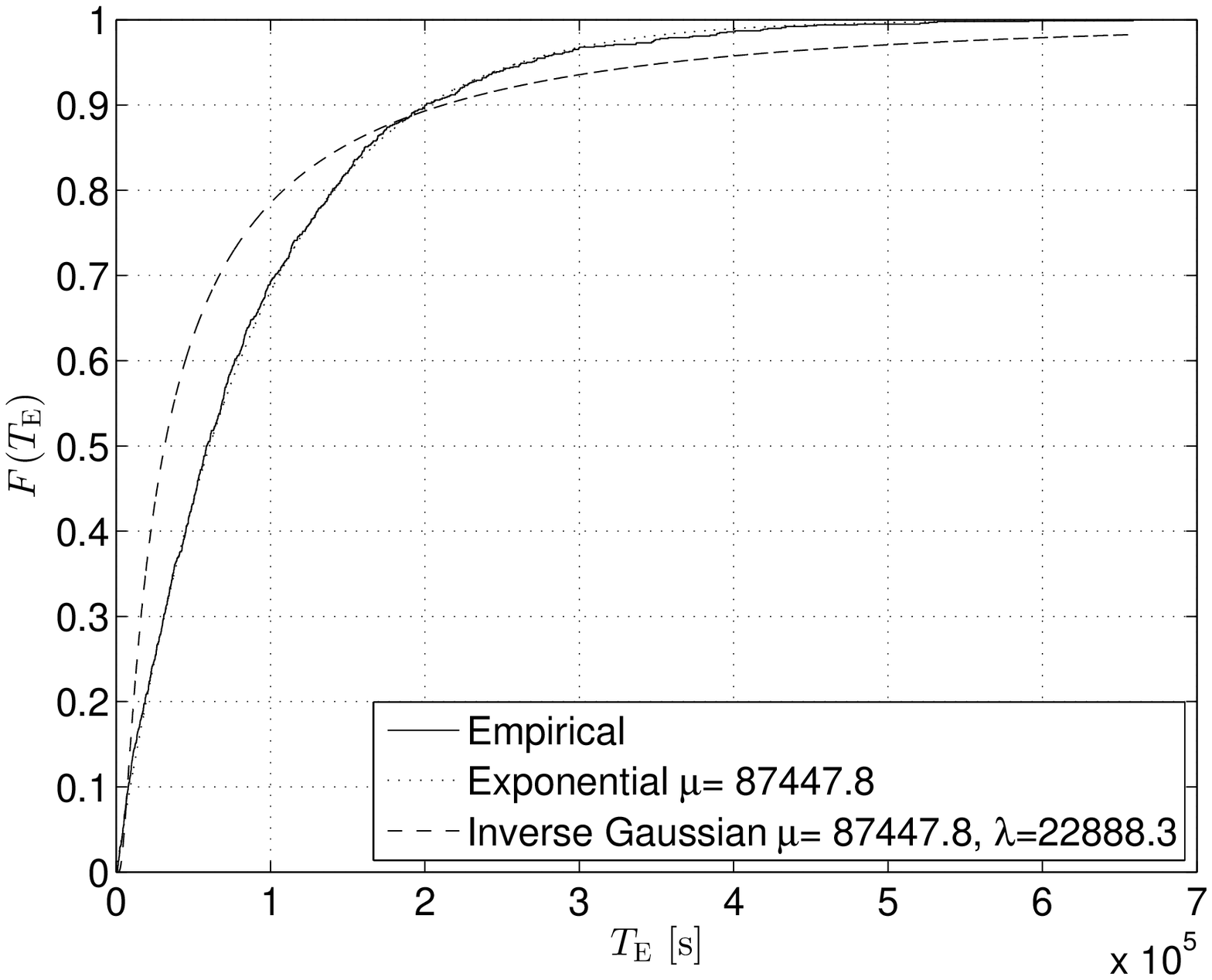}\label{fig:time_lastNminus1_hist_dcf_sense_75}} 
\subfigure[$\lambda=7.75$ packets/s]{\includegraphics[width=2.31in]{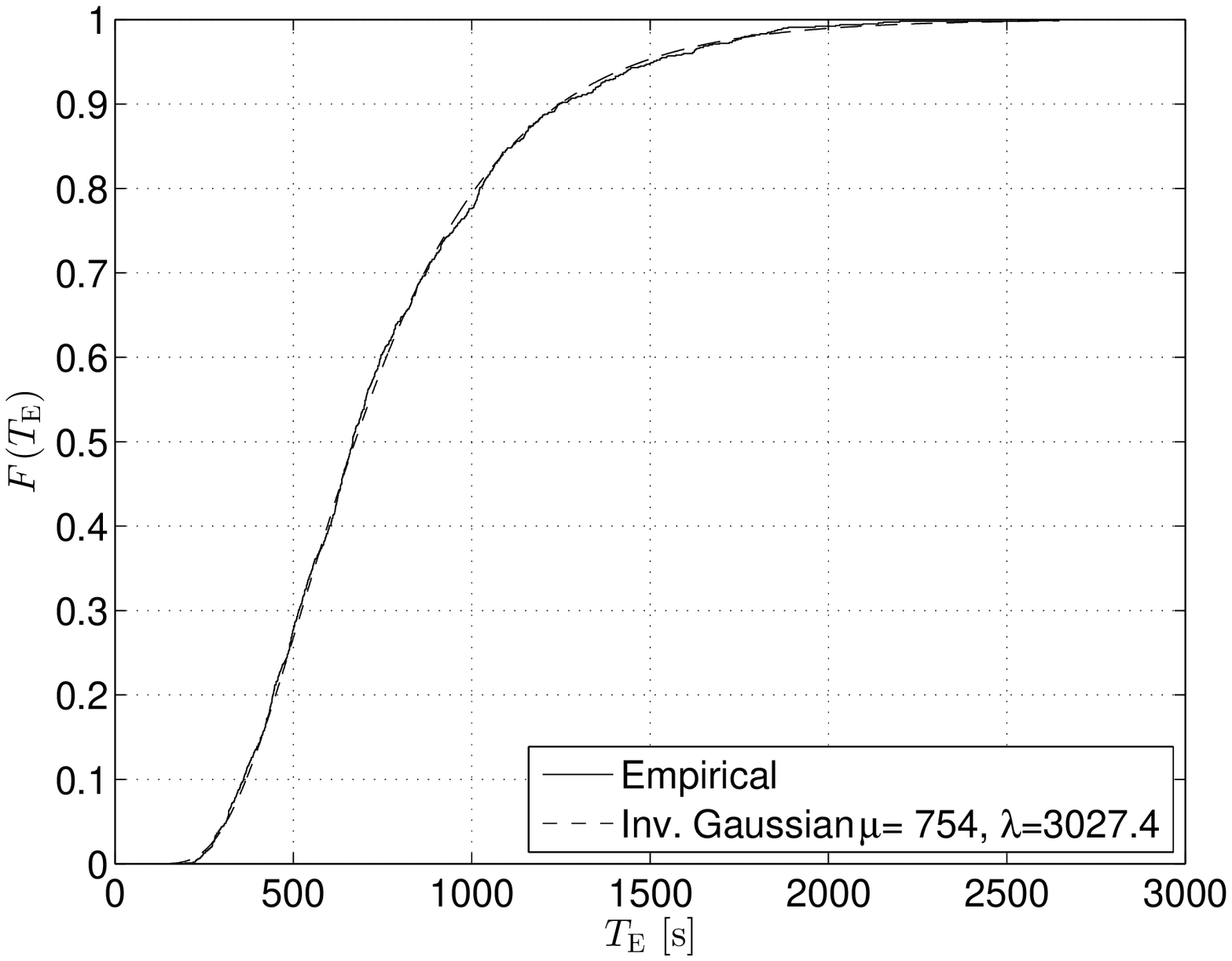}}
\subfigure[$\lambda=8$ packets/s]{\includegraphics[width=2.3in]{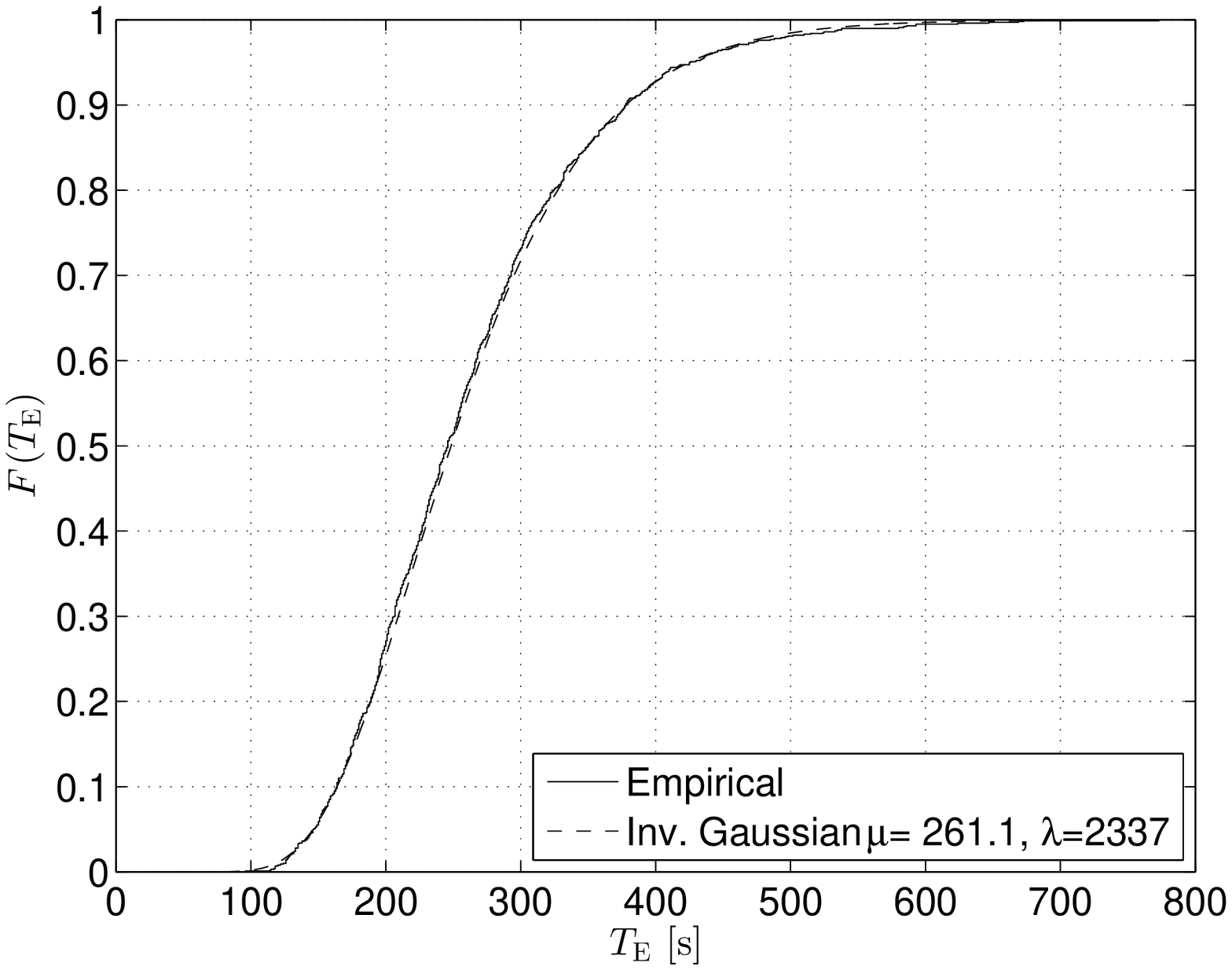}}
\caption{Empirical CDF of \emph{Metric 2} (last instant $N-1$ stations backlogged) using \emph{Method 3} (network simulator) in DCF with $W=32, m=5$. Inverse Gaussian (and exponential in Fig. \ref{fig:time_lastNminus1_hist_dcf_sense_75}) distribution with parameters selected to best fit the empirical distribution also depicted.}
\label{fig:time_lastNminus1_hist_dcf_sense}
\end{figure*}

\subsection{Experimental Evaluation}

We now perform an experimental evaluation using \emph{Method 1} (the system with coupled service rates) and \emph{Method 3} (network simulations) to obtain more insight into the duration and the distribution of the transitory phase. We also evaluate the accuracy of the analysis in \emph{Method 1} by comparing the results with those obtained in \emph{Method 3}.


\subsubsection{Duration and Distribution of the Transitory Phase}

We track the last time $N-1$ stations were backlogged once the average queue occupancy has reached a value higher than a given threshold (as described in \emph{Metric 2}). We consider this threshold to be equal to $\theta=0.75Q$ while the other considerations are as in the last subsection ($N=50$, parameters described in Section \ref{sec:identifying_decoupling} and $Q=1000$ packets). The empirical Cumulative Distribution Function (CDF) of $T_{\rm E}$ obtained from $1000$ simulation runs for different packet arrival rates for DCF with $W=32, m=5$ using \emph{Method 1} and \emph{Method 3} are shown in Fig. \ref{fig:time_lastNminus1_hist_dcf} and \ref{fig:time_lastNminus1_hist_dcf_sense}, respectively. We can observe in Fig. \ref{fig:time_lastNminus1_hist_dcf_75} and \ref{fig:time_lastNminus1_hist_dcf_sense_75} a considerable mismatch in magnitude between \emph{Method 1} and \emph{Method 3} for small packet arrival rates (we will further discuss this issue in 
the next 
subsection). However, for higher packet arrival rates, we can see that \emph{Method 1} provides 
increased accuracy. Important aspects to be noted here are the extremely long duration observed in network simulations for small packet arrival rates as well as the distribution of the metric of interest. First, note that the actual length of the transitory period in simulations when the packet arrival rate is $7.5$ packets/s is in the order of several hours (Fig. \ref{fig:time_lastNminus1_hist_dcf_sense_75}). Second, observe that for shorter durations of the transitory period (Figs. \ref{fig:time_lastNminus1_hist_dcf}(a-c) and Figs. \ref{fig:time_lastNminus1_hist_dcf_sense}(b-c)), the empirical distribution resembles that of an inverse Gaussian. In fact, both lognormal and Birnbaum-Saunders distributions provided similar goodness of fit (in terms of negative log likelihood) as the inverse Gaussian. However, we have presented the latter due to its relation to the distribution of first hitting times in ordinary diffusion processes \cite{rangarajan2000first}. Note that \emph{Metric 2} can be seen as the first 
time at which a certain boundary is reached and that the actual process can be seen as a random walk, which corresponds to a diffusion process in the scaling limit. However, for longer durations of the transitory period (Fig. \ref{fig:time_lastNminus1_hist_dcf_sense_75}), the distribution obtained can be better described as an exponential. These results suggest that the actual distribution could be described as a combination of both distributions, with the exponential one having more influence for longer durations (smaller packet arrival rates) and the inverse Gaussian being more relevant for shorter durations (higher packet arrival rates).

\subsubsection{Accuracy of Method 1}

We evaluate here in more detail the accuracy of \emph{Method 1} compared to network simulations (\emph{Method 3}). To this end, we show in Table \ref{tbl:comp_lastNminus1_dcf} the average time at which $N-1$ stations were backlogged before an average queue occupancy of $\theta=0.75Q$ is detected (i.e., \emph{Metric 2}). In order to get more insight into the accuracy of \emph{Method 1} we consider here even smaller packet arrival rates than in the last subsection (although still higher than the stability limit). As can be observed from Table \ref{tbl:comp_lastNminus1_dcf}, \emph{Method 1} provides subtantially shorter predictions of \emph{Metric 2} than what is obtained using \emph{Method 3}. Recall that the difference between \emph{Method 1} and \emph{Method 3} is the assumption of exponentially distributed service times in \emph{Method 1}. In fact, the channel access delay in DCF when considering the idle backoff periods, overhearing other transmissions as well as the effects of the collision probability, 
has been shown to be 
better 
charactised by a heavy-tailed distribution under the infinite retry limit assumption \cite{sakurai2007mac}. This suggests that the actual variability in service times is the underlining cause of obtaining longer transitory phases than those predicted considering the service time exponentially distributed. Therefore, we expect \emph{Method 1} to provide a lower bound on the time for the transitory phase to end. Note that this outcome supports the conclusions derived via numerical analysis, i.e., results shown in Fig. \ref{fig:events_to_hit} correspond to an effective lower bound of the actual time the system moves to the stable behaviour.

{\renewcommand{\arraystretch}{1.3} 
\begin{table}[tb!]
\centering
\caption{Comparison of \emph{Metric 2} (last instant $N-1$ stations backlogged) obtained using \emph{Method 1} (the system with coupled service rates) and \emph{Method 3} (network simulator) in DCF with $W=32, m=5$.}\label{tbl:comp_lastNminus1_dcf}
\begin{tabular}{|c|c|c|} \hline
$\lambda$ [packets/s] & $T_{\rm E}$ (\emph{Method 1}) & $T_{\rm E}$ (\emph{Method 3}) \\ \hline
$7.1$ & 31.76 h & - \\ \hline
$7.2$ & 2.05 h & - \\ \hline
$7.25$ & 51.17 min & - \\ \hline
$7.5$ & 4.90 min. & 19.57 h\\ \hline
$7.75$ & 1.87 min. & 12.57 min. \\ \hline
$8$ & 1.10 min. & 4.35 min. \\ \hline
\end{tabular}
\end{table}

\section{Practical Implications}\label{sec:implications}

In this section, we discuss different implications of the long transitory phase for real implementations. First, we describe the implications of the transitory being of extremely long duration, then we overview the effect of the assumption of exponential interarrival of packets and finally, we provide a practical way to take advantage of the high-throughput phase.


\subsection{The Extremely Long Duration}

We have shown that, under certain conditions, the system rapidly moves to the stable operation. However, there is also the potential to face a long transitory phase. In the latter case, the network has to remain under the same conditions for a long period of time (of the order of magnitude of hours). However, this situation is unlikely to occur in current wireless networks in which moderate dynamics, such as nodes joining/leaving the network and traffic pattern variations, may be present. Thus, under certain conditions (especially at packet arrival rates slightly higher than the maximum service rate), results from the transitory-phase may actually be the performance obtained in real scenarios. Therefore, it is important to consider both results when analysing the performance of these networks. In the same way that the results from the transitory phase only describe a certain behaviour of the network, to only consider stable performance can also produce disagreement between the actual and the predicted 
evaluation, especially in scenarios with high network dynamics.

\subsection{Non-Poisson Traffic}


One of the crucial assumptions in this work is the consideration of Poisson traffic. This is a common assumption in many analytical models and experimental evaluations. However, it is important to highlight here that this consideration has a direct impact on the probability of having a certain number of stations with a packet pending for transmission. Moreover, it also affects the probability of having a node leave the set of contending stations once a packet is transmitted, i.e., the probability that after a transmission the station is left with an empty queue. In practical scenarios, some packet interrarival times may be well characterised with an exponential distribution. However, other traffic sources as constant bit rate (coming from multimedia applications) or bursty traffic may also be present. Moreover, a combination of all these may occur at the same or at different stations in the network. 

When considering only constant bit rate traffic, we have observed in simulations that the network either remains in the transitory phase for the whole experiment or immediately moves to the stable behaviour. In the former case, there seems to be an artificial scheduling caused by the traffic source, in which the probability that several stations have a packet pending for transmission at the same time can be considered negligible. The latter case is observed when the instant at which the first packet is generated coincides in time (or it is very close) among the different contending stations. In that case, the transitory phase does not occur and the network operates in the stable phase from the start-up. 

A similar behaviour is expected to be found with bursty traffic. Note that, depending on the burst size and interval between bursts, stations will either not coincide with packets pending for transmission or they will simultaneously contend at certain intervals. These situations can lead to an oscillating behaviour in which saturation throughput may be observed in certain periods, while in others, the network may be operating in the high-throughput phase. Thus, we believe that the traffic arrivals will influence what combination of transitory and long term behaviour dominate in practice.

\begin{figure*}[!tb] 
\centering
\subfigure[$\lambda=7.5$ packets/s]{\includegraphics[width=2.3in]{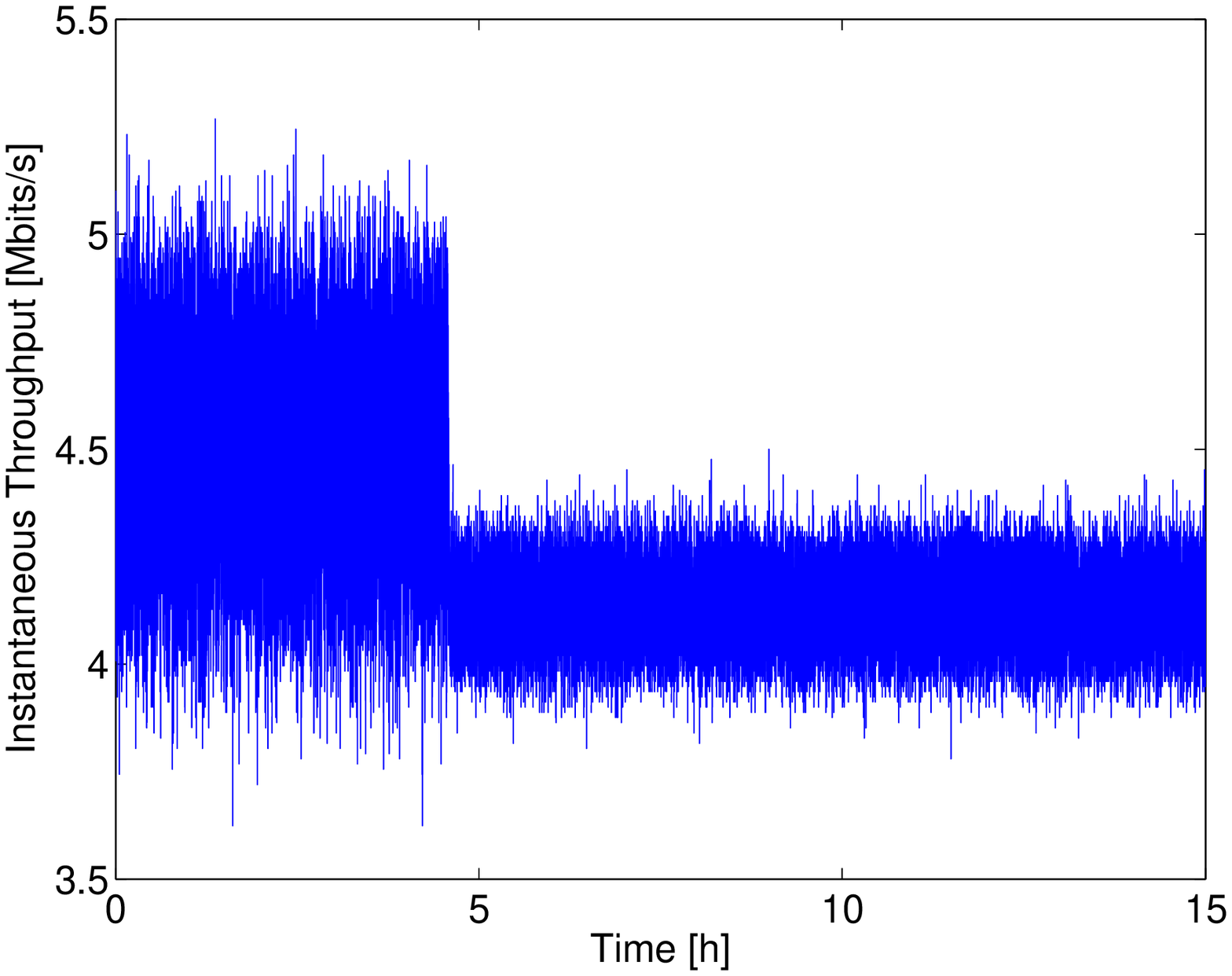}}
\subfigure[$\lambda=7.75$ packets/s]{\includegraphics[width=2.3in]{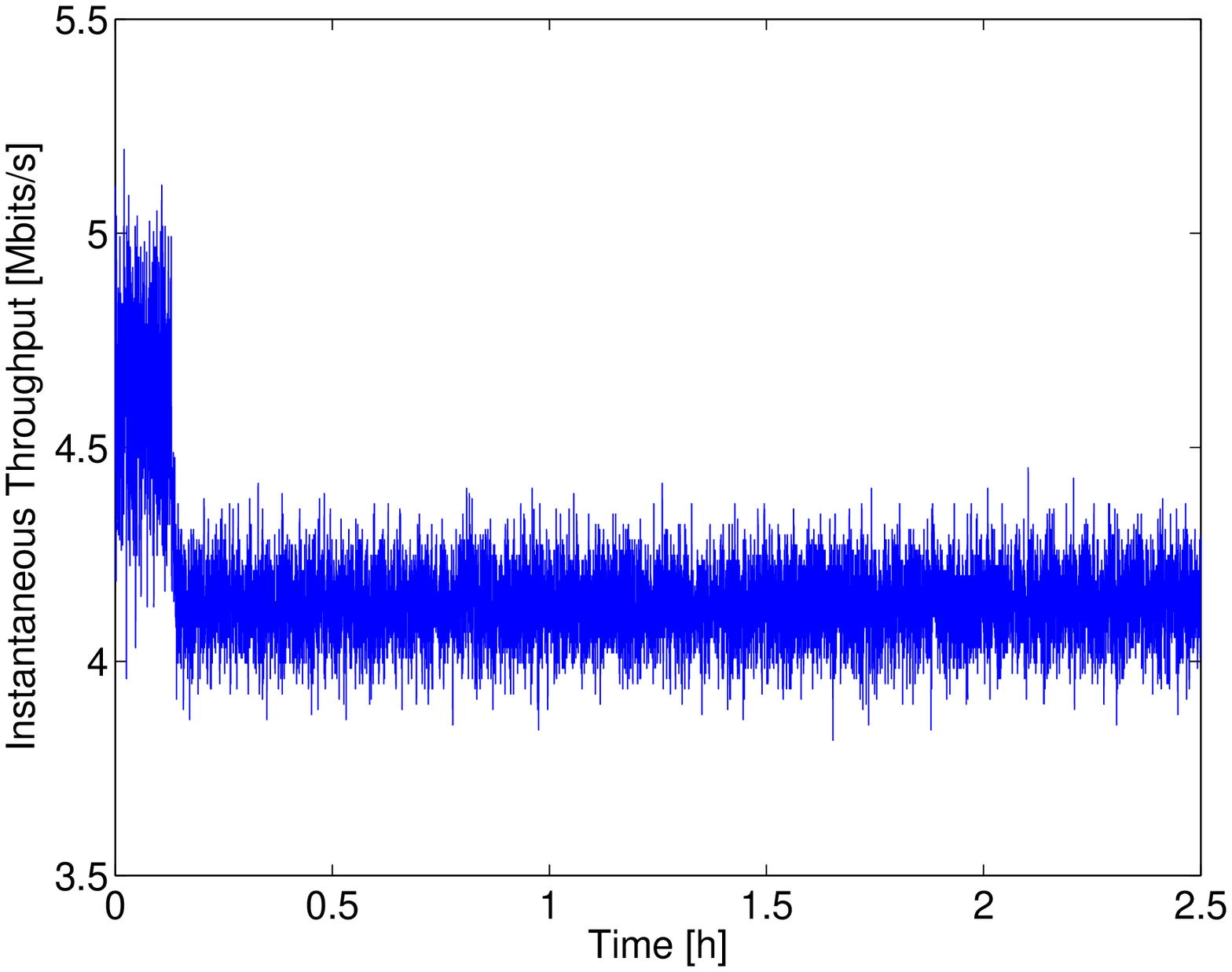}}
\subfigure[$\lambda=8$ packets/s]{\includegraphics[width=2.25in]{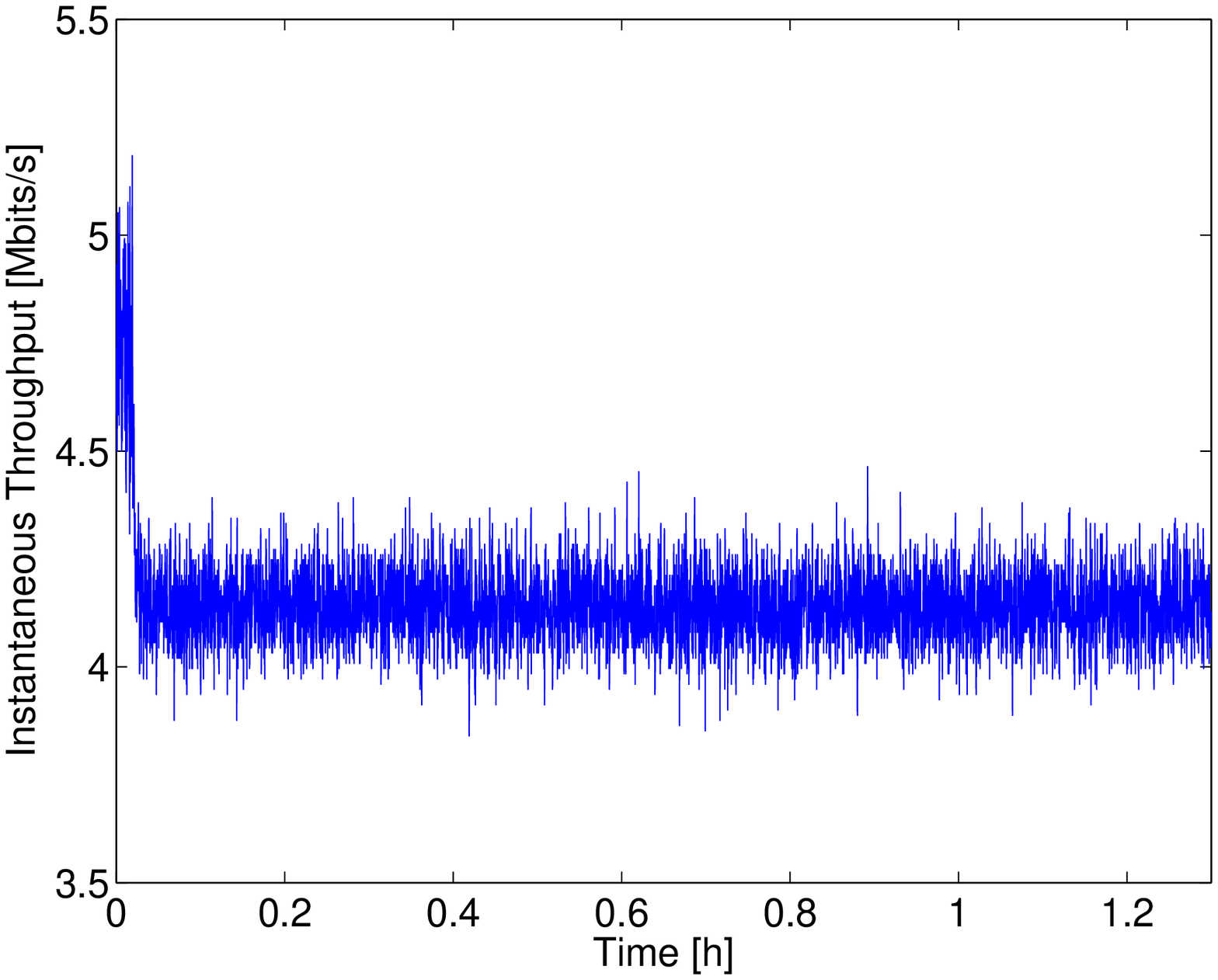}}\\
\caption{Evolution of instantaneous throughput from a single simulation run using \emph{Method 3} (network simulator) in DCF with $W=32, m=5$.}
\label{fig:temporal}
\end{figure*}

\begin{figure*}[!tb] 
\centering
\subfigure[$\lambda=7.5$ packets/s]{\includegraphics[width=2.3in]{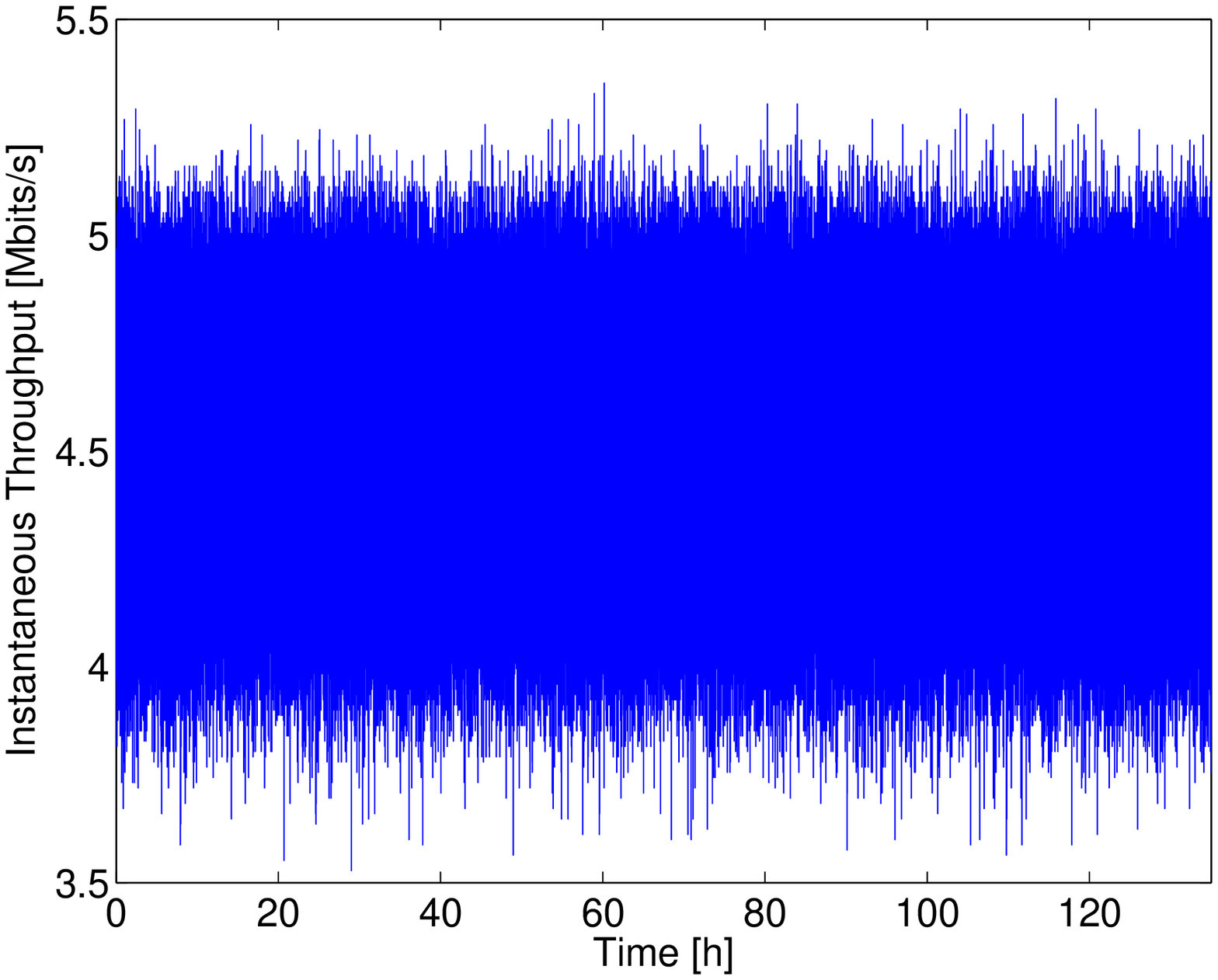}}
\subfigure[$\lambda=7.75$ packets/s]{\includegraphics[width=2.3in]{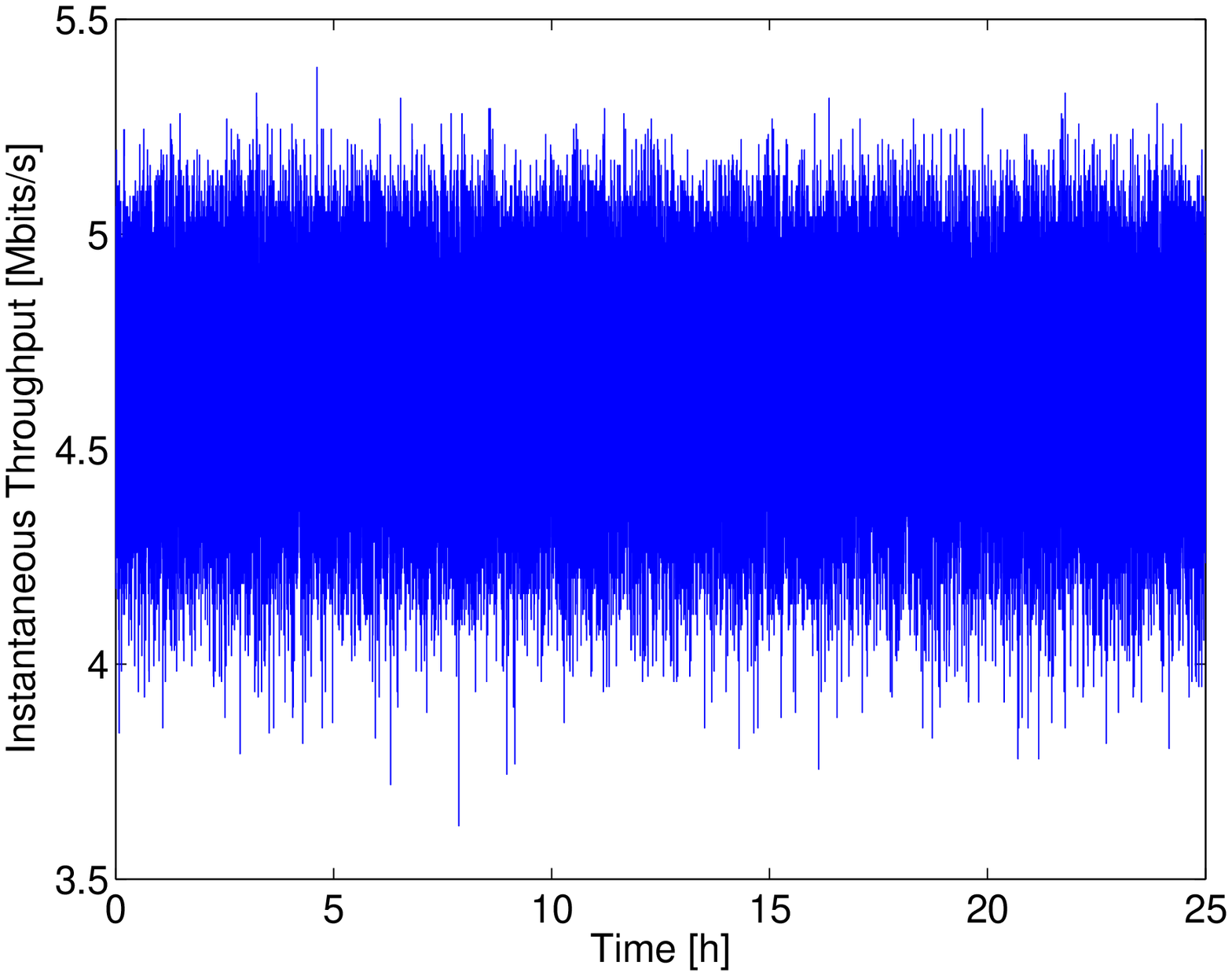}}
\subfigure[$\lambda=8$ packets/s]{\includegraphics[width=2.3in]{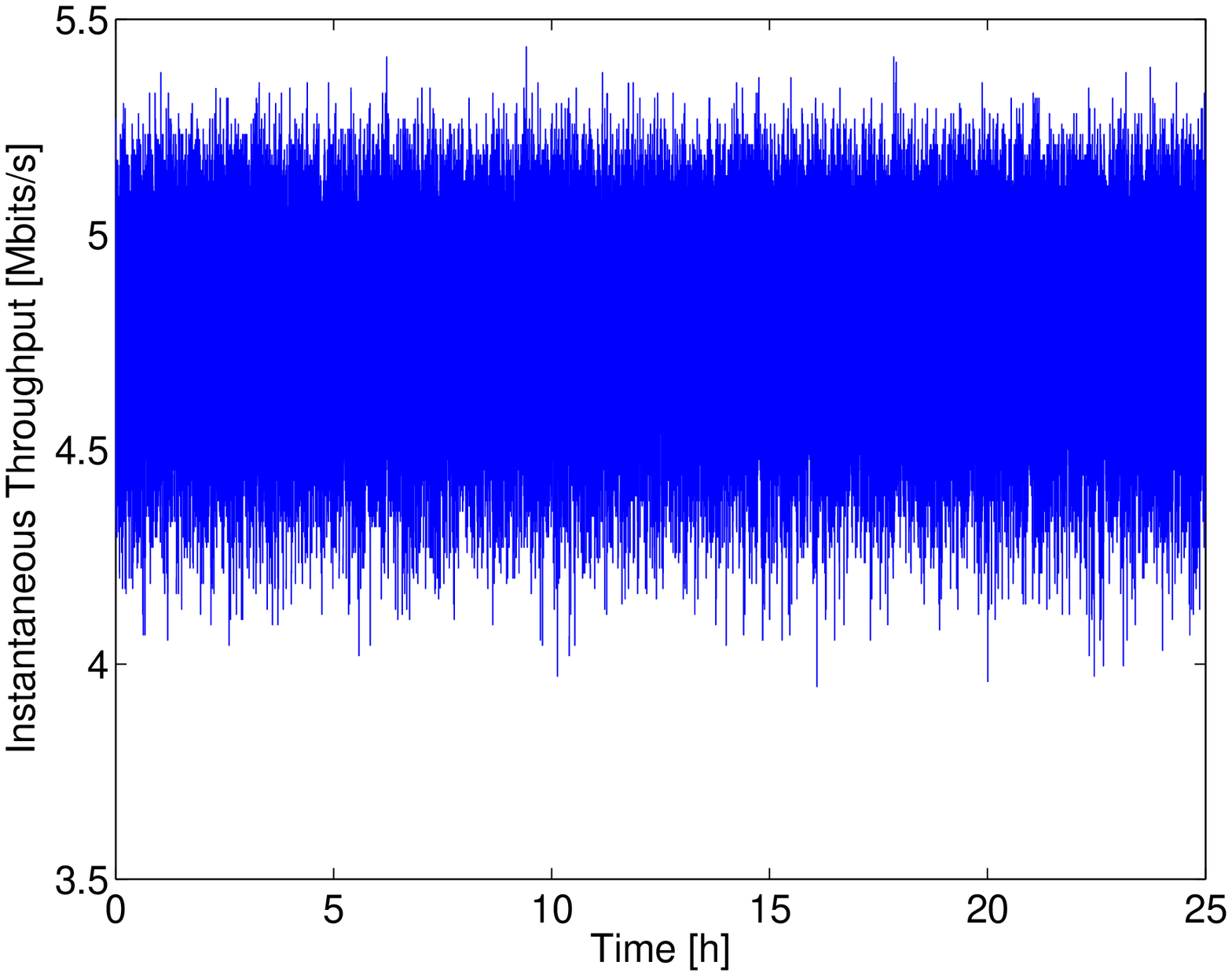}}\\
\caption{Evolution of instantaneous throughput from a single simulation run using \emph{Method 3} (network simulator) in DCF with $W=32, m=5$ with exponential delay after successful packet transmissions.}
\label{fig:temporal_delay}
\end{figure*}

\subsection{Exploiting the High-Throughput Phase}

Since the transitory phase corresponds to high-throughput performance, it can be beneficial to take advantage of the existence of the long transitory phase. Observe that, given that there is a small collision probability in the transitory phase, it is desirable that stations operate under these conditions as long as possible. One way to force the system to move back to the transitory behaviour is to delete all packets in the queues when saturated behaviour is detected, as proposed in \cite{suleiman2008impact}. Assuming that the packets loaded in the queues experience an unacceptable large delay in saturation conditions, it can be beneficial to delete them and rely on higher layers for retransmission. However, it has to be considered that a non-coordinated reset of the buffer will not ensure the return to the transitory period. Thus, a centralised controller that detects and notifies stations to take this action simultaneously is needed. Furthermore, an increase of traffic load may appear due to retransmissions at 
higher layers, creating a potential for 
oscillatory behaviour. 

Here we propose a distributed and simple solution to further increase the duration of the long transitory phase. To this aim, we introduce an exponentially-distributed delay after successful transmissions at the MAC layer during which stations do not attempt transmission of the next packet in the queue (if any). Note that this mechanism is equivalent to artificially increase the probability that a station that has successfully transmitted is left with an empty buffer. With a proper setting of this artificial delay, we can reduce the probability of having a high number of stations contending for the channel and thus, keep the network in the transitory phase. To illustrate the benefits of this proposal we show in Fig. \ref{fig:temporal} the temporal evolution of the throughput (in $1$s intervals) obtained from \emph{Method 3} for different packet arrival rates using DCF with $N=50, W=32, m=5$ and the same parameters used through this article (shown in Table \ref{tbl:parameters}). The change in behaviour can be 
clearly observed: the 
instantaneous 
throughput moves from one of the solutions predicted by the analytical model (solution labelled with \emph{Analysis 1} in Fig. \ref{fig:renewal_reward_dcf_tp_cw32_m5}) to the saturated solution (labelled as \emph{Analysis 2} in the same figure) at a given instant of time. Then, in Fig. \ref{fig:temporal_delay}, we show the instantaneous throughput achieved by adding the extra delay after successful packet transmissions (exponentially distributed with mean equal to $2/\mu(N)$). We can observe how there is no change of behaviour and the network operates in the high-throughput phase (solution labelled as \emph{Analysis 1} in Fig. \ref{fig:renewal_reward_dcf_tp_cw32_m5}) for a longer time than the maximum transitory duration observed in our experimental evaluation (Fig. \ref{fig:time_lastNminus1_hist_dcf_sense}). Note that operating in non-saturated conditions results in a smaller average delay compared to when the network is saturated. Thus, in general, despite adding an extra delay per successful 
packet transmission, the delay performance of the network is improved. Furthermore, this technique may prove useful to prevent oscillatory behaviour caused by traffic patterns other than Poisson distributed.

\section{Final Remarks}\label{sec:conclusions}

In this work we have first demonstrated that there may be a potentially long transitory phase in many random access protocols when we operate right after the stability limit and under certain circumstances, such as infinite (or large enough to be considered infinite) buffer size and exponentially distributed interarrival of packets at the MAC layer. For this purpose, we have defined a simplified analytical model that considers the number of backlogged stations instead of keeping track of the queue occupancies at each node. This approximation has allowed us to compute an effective lower bound on the actual duration of the transitory phase by making the analysis tractable and amenable to numerical evaluation.

Second, with the goal of providing more insight on the duration of the transitory phase, we have performed an experimental evaluation using both \emph{i)} Monte Carlo simulations of a system of parallel coupled queues and \emph{ii)} a network simulator. Experimental results validate the analytical findings and show the duration of the transitory phase to be in the order of hours under certain conditions and well characterised by a combination of inverse Gaussian and exponential distributions.

We have also discussed the practical implications of our findings. On one side, we state that a complete evaluation under the circumstances described in this work must consider both, the transitory as well as the stable results. Given the extremely long transitory duration, the change in behaviour may be difficult to observe in real implementations where high dynamics are present. Thus, performance results from the transitory, instead of the stable, phase may correspond to observations in real deployments. We have also highlighted the importance of the assumption of exponential interarrival of packets. In practical implementations, we may find traffic patterns that differ from this consideration, which may affect the duration of the transitory phase as well as potentially causing an oscillatory behaviour in network performance. Finally, we have also suggested a distributed and simple method to keep the network operating in the transitory phase. Given the substantial difference in throughput that can be 
observed in certain cases between the transitory and stable phases, maintaining the network in the transitory period can provide substantial benefits in throughput as well as delay.   

We have established the relation between performance misprediction errors with \emph{i)} the use of iterative solvers of analytical models based on the decoupling approximation and \emph{ii)} the presence of the extremely long transitory phase in experimental evaluations. We believe these findings, along with the characterisation of the duration of the transitory phase we have also presented in this work, are necessary to draw complete conclusions on network performance. Moreover, given the significant potential magnitude of these misprediction errors as well as their impact on predicted performance, optimisation analysis and MAC parameter configuration, we consider the outcomes of this work even more relevant.

\appendices
\section{Renewal Reward Analysis - Aloha}\label{appendix:aloha}

We take a renewal reward approach \cite{kumar05,bianchi05,medepalli2005system,BBellalta-Eurocon2005} motivated by the fact that the attempt rate of a given node can be viewed as a regenerative process. This approach allows us to compute metrics of interest without the need to solve all state probabilities of the Markov Chain embedded in the analysis. We also apply the decoupling approximation to model both: \emph{i)} the conditional (given that a packet is transmitted) collision probability and \emph{ii)} the buffer occupancy probability right after a transmission, as independent of the backoff stage at which the packet is transmitted.

The rest of assumptions and considerations taken into account are the typical: \emph{i)} infinite, or large enough to be considered infinite, buffer size and retry limit, \emph{ii)} exponentially distributed interarrival of packets, \emph{iii)} ideal channel conditions, and \emph{iv)} that all nodes are in mutual coverage range, that is, all nodes can overhear each other's transmissions.

Assuming an infinite buffer size, the mean queue occupancy ($\rho$) of a node is derived considering the time needed to release a packet from the queue ($D$), called service time, and the packet arrival rate from the network layer ($\lambda$) as:

\begin{equation}\label{eq:rho}
\rho=\min(\lambda D,1).
\end{equation}

The service time is computed as the sum of the following three components: \emph{i)} the total time on average spent in transmitting packets that result in a collision, \emph{ii)} the time spent successfully transmitting the packet and \emph{iii)} the total average backoff duration until the successful frame transmission (equal to $\frac{W}{2}\sigma$ for Aloha, $\sigma$ being the duration of an empty slot):

\begin{equation}\label{eq:service_time}
D=(n_{\rm t}-1)\left(\frac{W}{2}\sigma + T_{\rm c}\right) + \frac{W}{2}\sigma+T_{\rm s},
\end{equation}

where $n_{\rm t}$ is the average number of attempts to successfully transmit a packet. The duration of an empty slot ($\sigma$), a successful transmission ($T_{\rm s}$) and a collision ($T_{\rm c}$) are computed as shown in Eq.~\ref{eq:Ts}. We have considered the same packet interframe spaces as in DCF for comparison purposes \cite{IEEE80211-IEEESTD1999}.

\begin{equation}\label{eq:Ts}
\sigma = T_{\rm s} = T_{\rm c} = \DIFS + T_{\rm fra} + \SIFS + T_{\rm ack},
\end{equation}


where $T_{\rm fra}$ and $T_{\rm ack}$ denote the times to transmit the frame and the acknowledgement, respectively. Considering also the DCF protocol, we compute $T_{\rm fra}$ as shown in Eq.~\ref{eq:Tfra} and $T_{\rm ack}$ as in Eq.~\ref{eq:Tack}.

\begin{equation}\label{eq:Tfra}
T_{\rm fra} = \frac{L_{\rm PLCPPre}+L_{\rm PLCPH}}{R_{\rm PHY}} + \frac{L_{\rm MACH}}{R_{\rm basic}} + \frac{L}{R_{\rm data}},
\end{equation}

\begin{equation}\label{eq:Tack}
T_{\rm ack} = \frac{L_{\rm PLCPPre}+L_{\rm PLCPH}}{R_{\rm PHY}} + \frac{L_{\rm ack}}{R_{\rm basic}},
\end{equation}

being $L_{\rm PLCPPre}$, $L_{\rm PLCPH}$, $L_{\rm MACH}$, $L_{\rm ack}$ and $L$ the length of the PLCP preamble, PLCP header, MAC header, acknowledgement and data payload, respectively, while $R_{\rm PHY}$, $R_{\rm basic}$ and $R_{\rm data}$ denote the physical, basic and data rates \cite{IEEE80211-IEEESTD1999}.

Under the decoupling assumption with an infinite number of retries, the average number of attempts to transmit a frame ($n_{\rm t}$) is computed as shown in Eq.~\ref{eq:nt}.

\begin{equation}\label{eq:nt}
 n_{\rm t}=\frac{1}{1-p},
\end{equation}

where the conditional collision probability ($p$) is obtained as the complementary of having at least one of the other $n-1$ nodes transmitting a frame in the same slot (Eq.~\ref{eq:p}), with $\tau$ denoting the attempt rate of a node. 

\begin{equation}\label{eq:p}
 p=1-(1-\tau)^{n-1}
\end{equation}

We view the attempt rate as a regenerative process, where the renewal events are when the MAC begins processing a new frame. Thus, we apply the renewal reward theorem (Eq.~\ref{eq:tau}). 

\begin{equation}\label{eq:tau}
 \tau=\frac{n_{\rm t}}{n_{\rm t}\left(\frac{W}{2}+1\right)+I}
\end{equation}

The term $I$ in Eq.~\ref{eq:tau} accounts for the number of slots in idle state (when there is no packet waiting in the queue for transmission) and is computed as the probability of having an empty queue over the probability of a packet arrival in a slot. Considering an M/M/1 queue, we then compute $I$ as in Eq.~\ref{eq:i}.

\begin{equation}\label{eq:i}
 I=\frac{1-\rho}{1-e^{-\lambda \sigma}}
\end{equation}

Finally, we obtain the throughput as:

\begin{equation}\label{eq:S}
S =  \rho \frac{L}{D}
\end{equation}

\section{Renewal Reward Analysis - DCF}\label{appendix:dcf}

The analysis used for DCF is similar to the one presented for Aloha in Appendix \ref{appendix:aloha}. The service time now takes into account the average number of slots waiting for the backoff to expire ($E[w]$). Moreover, we need to consider that the duration of a backoff slot is also no longer $\sigma$ but that it depends on the transmissions of the other nodes in the network. The new expression for the service time is:

\begin{equation}\label{eq:service_time_dcf}
D=(n_{\rm t}-1)(E[w]\alpha + T_{\rm c}) + E[w]\alpha+T_{\rm s},
\end{equation}

where $\alpha$ is the average slot duration while the node is in backoff and the transmission attempt probability changes as:

\begin{equation}\label{eq:tau_dcf}
 \tau=\frac{n_{\rm t}}{n_{\rm t}\left(E[w]+1\right)+I},
\end{equation}

where $I$ now considers the average slot duration ($\alpha$):

\begin{equation}\label{eq:i_dcf}
 I=\frac{1-\rho}{1-e^{-\lambda \alpha}}.
\end{equation}

The average slot duration while the node is in backoff is derived depending on the type of slot that is overheard (Eq.~\ref{eq:alpha_dcf}). A slot can be empty if no other node transmits (that occurs with $p_{\rm e}$ probability) and, in such a case, its duration is $\sigma$ (defined in \cite{IEEE80211-IEEESTD1999}). Otherwise, it can be occupied due to a successful transmission (that happens with probability $p_{\rm s}$) or a collision (that occurs with $p_{\rm c}$ probability), with durations $T_{\rm s}$ and $T_{\rm c}$, respectively. 

\begin{equation}\label{eq:alpha_dcf}
 \alpha=p_{\rm s}T_{\rm s} + p_{\rm c}T_{\rm c} + p_{\rm e}\sigma
\end{equation}

Probabilities $p_{\rm s}$, $p_{\rm e}$ and $p_{\rm c}$ are obtained as follows:

\begin{align}\label{eq:p_others_dcf}
  p_{\rm s}=(n-1)\tau (1-\tau)^{n-2},\nonumber\\
  p_{\rm e}=(1-\tau)^{n-1},\nonumber\\
  p_{\rm c}=1 - p_{\rm s} - p_{\rm e}.
\end{align}

Finally, the average number of backoff slots can be computed as shown in Eq.~\ref{eq:ew_dcf} derived in \cite{bianchi05}.

\begin{equation}\label{eq:ew_dcf}
 E[w] = \frac{1-p-p(2p)^m}{1-2p}\frac{W}{2}-\frac{1}{2}
\end{equation}

\section*{Acknowledgments}

This work has been partially supported by the Science Foundation Ireland grant 08/SRC/I1403 and 07/SK/I1216a.

\ifCLASSOPTIONcaptionsoff
  \newpage
\fi

\bibliographystyle{IEEEtran}
\bibliography{library}

\begin{thebibliography}{10}
\providecommand{\url}[1]{#1}
\csname url@rmstyle\endcsname
\providecommand{\newblock}{\relax}
\providecommand{\bibinfo}[2]{#2}
\providecommand\BIBentrySTDinterwordspacing{\spaceskip=0pt\relax}
\providecommand\BIBentryALTinterwordstretchfactor{4}
\providecommand\BIBentryALTinterwordspacing{\spaceskip=\fontdimen2\font plus
\BIBentryALTinterwordstretchfactor\fontdimen3\font minus
  \fontdimen4\font\relax}
\providecommand\BIBforeignlanguage[2]{{%
\expandafter\ifx\csname l@#1\endcsname\relax
\typeout{** WARNING: IEEEtran.bst: No hyphenation pattern has been}%
\typeout{** loaded for the language `#1'. Using the pattern for}%
\typeout{** the default language instead.}%
\else
\language=\csname l@#1\endcsname
\fi
#2}}

\bibitem{cano2013PLCmodel}
C.~Cano and D.~Malone, ``{On efficiency and validity of previous Homeplug MAC
  performance analysis},'' \emph{Submitted to IEEE/ACM Transactions on
  Networking. Available at: http://arxiv.org/pdf/1401.6803.pdf}, 2014.

\bibitem{IEEE80211-IEEESTD1999}
{IEEE Std 802.11}, ``{Wireless} {LAN} {Medium} {Access} {Control} {(MAC)} and
  {Physical} {Layer} {(PHY)} {Specifications},'' \emph{ANSI/IEEE Std 802.11},
  1999 Edition (Revised 2007).

\bibitem{HomeplugStd}
{Homeplug Powerline Alliance}, ``{Homeplug 1.0 Specification},'' 2001.

\bibitem{suleiman2008impact}
K.~H. Suleiman, T.~Javidi, M.~Liu, and S.~Kittipiyakul, ``{The impact of MAC
  buffer size on the throughput performance of IEEE 802.11},'' \emph{Technical
  Report}, 2008.

\bibitem{cao2010modeling}
Z.~Cao, R.~P. Liu, X.~Yang, and Y.~Xiao, ``{Modeling IEEE 802.11 DCF system
  dynamics},'' in \emph{Proceedings of the IEEE Wireless Communications and
  Networking Conference (WCNC)}, 2010, pp. 1--5.

\bibitem{chung2006performance}
M.~Y. Chung, M.-H. Jung, T.-J. Lee, and Y.~Lee, ``{Performance analysis of
  HomePlug 1.0 MAC with CSMA/CA},'' \emph{IEEE Journal on Selected Areas in
  Communications}, vol.~24, no.~7, pp. 1411--1420, 2006.

\bibitem{pitts2008analysing}
J.~M. Pitts and O.~M. Shepherd, ``Analysing the transition between unsaturated
  and saturated operating conditions in 802.11 network scenarios,'' in
  \emph{Proceedings of the Military Communications Conference (MILCOM).}, 2008,
  pp. 1--7.

\bibitem{borst2008stability}
S.~Borst, M.~Jonckheere, and L.~Leskel{\"a}, ``Stability of parallel queueing
  systems with coupled service rates,'' \emph{Discrete Event Dynamic Systems},
  vol.~18, no.~4, pp. 447--472, 2008.

\bibitem{luo1999stability}
W.~Luo and A.~Ephremides, ``{Stability of N interacting queues in random-access
  systems},'' \emph{IEEE Transactions on Information Theory}, vol.~45, no.~5,
  pp. 1579--1587, 1999.

\bibitem{Duffy2010}
K.~R. Duffy, ``{Mean field Markov models of wireless local area networks},''
  \emph{Markov Processes and Related Fields}, vol.~16, no.~2, pp. 295--328,
  2010.

\bibitem{kumar05}
A.~Kumar, E.~Altman, D.~Miorandi, and M.~Goyal, ``New insights from a fixed
  point analysis of single cell {IEEE} 802.11 {WLANs},'' in \emph{Proceedings
  of the IEEE International Conference on Computer Communications (INFOCOM)},
  vol.~3, 2005, pp. 1550--1561.

\bibitem{bianchi05}
G.~Bianchi and I.~Tinnirello, ``Remarks on {IEEE} 802.11 {DCF} performance
  analysis,'' \emph{IEEE Communication Letters}, vol.~9, no.~8, pp. 765--767,
  2005.

\bibitem{garetto2005performance}
M.~Garetto and C.-F. Chiasserini, ``{Performance analysis of 802.11 WLANs under
  sporadic traffic},'' in \emph{Proceedings of the Springer Networking
  Technologies, Services, and Protocols; Performance of Computer and
  Communication Networks; Mobile and Wireless Communications Systems}, 2005,
  pp. 1343--1347.

\bibitem{panda2009state}
M.~K. Panda and A.~Kumar, ``{State dependent attempt rate modeling of single
  cell IEEE 802.11 WLANs with homogeneous nodes and Poisson arrivals},'' in
  \emph{Proceedings of the First IEEE International Communication Systems and
  Networks and Workshops (COMSNETS)}, 2009, pp. 1--10.

\bibitem{wilson1978evaluation}
J.~R. Wilson and A.~A.~B. Pritsker, ``Evaluation of startup policies in
  simulation experiments,'' \emph{Simulation}, vol.~31, no.~3, pp. 79--89,
  1978.

\bibitem{chen2004sense}
G.~Chen, J.~Branch, M.~Pflug, L.~Zhu, and B.~Szymanski, ``{SENSE: A Sensor
  Network Simulator},'' \emph{Advances in Pervasive Computing and Networking},
  pp. 249--267, 2004.

\bibitem{gillespie1977exact}
D.~T. Gillespie, ``Exact stochastic simulation of coupled chemical reactions,''
  \emph{The journal of physical chemistry}, vol.~81, no.~25, pp. 2340--2361,
  1977.

\bibitem{rangarajan2000first}
G.~Rangarajan and M.~Ding, ``First passage time distribution for anomalous
  diffusion,'' \emph{Physics Letters A}, vol. 273, no.~5, pp. 322--330, 2000.

\bibitem{sakurai2007mac}
T.~Sakurai and H.~L. Vu, ``{MAC access delay of IEEE 802.11 DCF},'' \emph{IEEE
  Transactions on Wireless Communications}, vol.~6, no.~5, pp. 1702--1710,
  2007.

\bibitem{medepalli2005system}
K.~Medepalli and F.~A. Tobagi, ``{System centric and user centric queueing
  models for IEEE 802.11 based wireless LANs},'' in \emph{Proceedings of the
  2nd International Conference on Broadband Networks (BroadNets)}, 2005, pp.
  612--621.

\bibitem{BBellalta-Eurocon2005}
B.~Bellalta, M.~Oliver, M.~Meo, and M.~Guerrero, ``{A simple model of the IEEE
  802.11 MAC protocol with heterogeneous traffic flows},'' in \emph{Proceedings
  of the IEEE Eurocon}, November 2005.

\end{thebibliography}

\vspace{0.25cm}

\begin{wrapfigure}{l}{30mm}
    \includegraphics[width=30mm]{./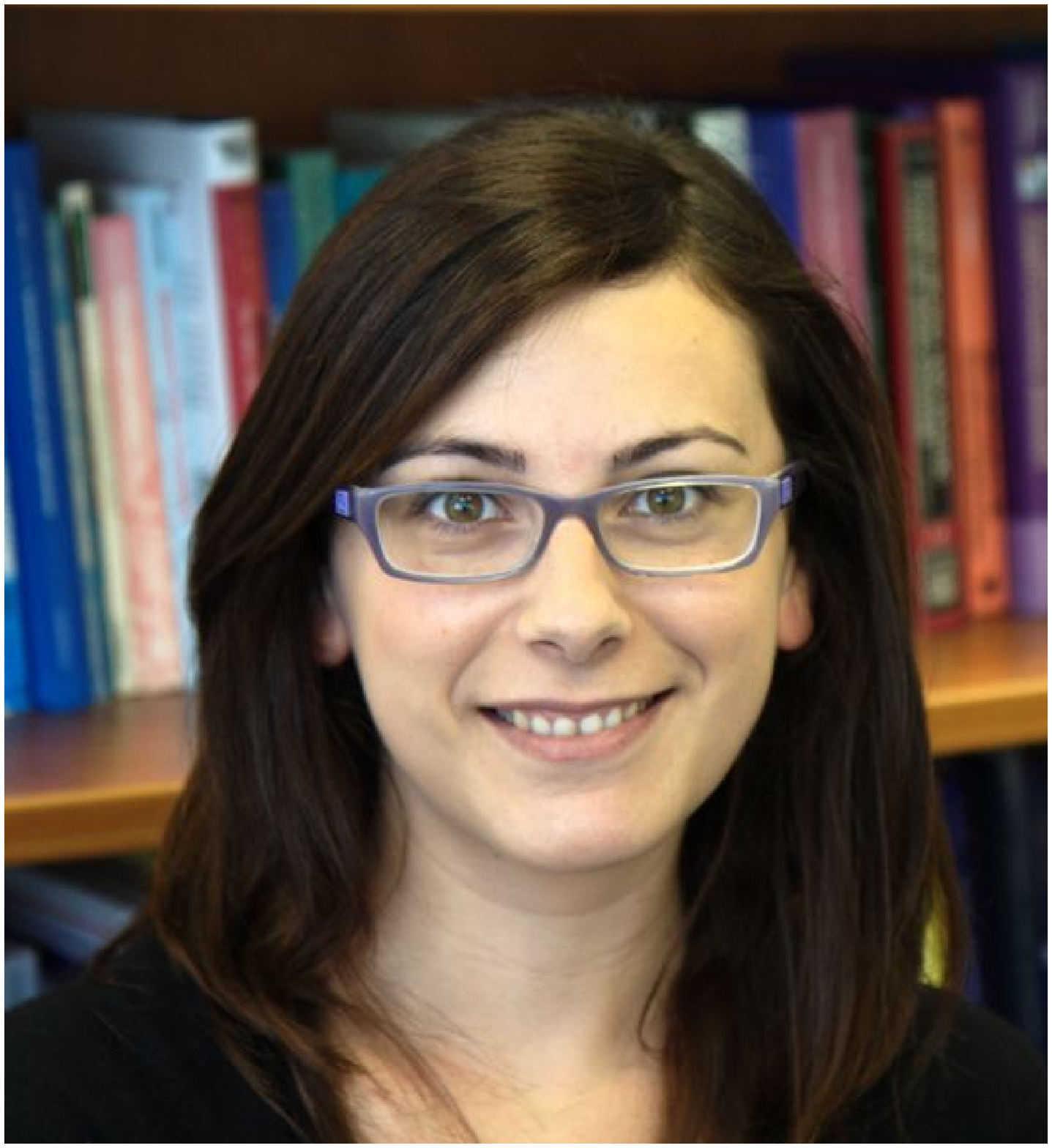}
\end{wrapfigure}
\textbf{Cristina Cano} obtained the Telecommunications Engineering Degree at the Universitat Politecnica de Catalunya (UPC) in February 2006. Then, she received her M.Sc. (2007) and Ph.D. (2011) on Information, Communication and Audiovisual Media Technologies from the Universitat Pompeu Fabra (UPF). Since July 2012, she has been working as a research fellow at the Hamilton Institute (NUIM) in wireless networks, sensor networks, power line communications and MAC layer design.\\

\begin{wrapfigure}{l}{30mm}
    \includegraphics[width=30mm]{./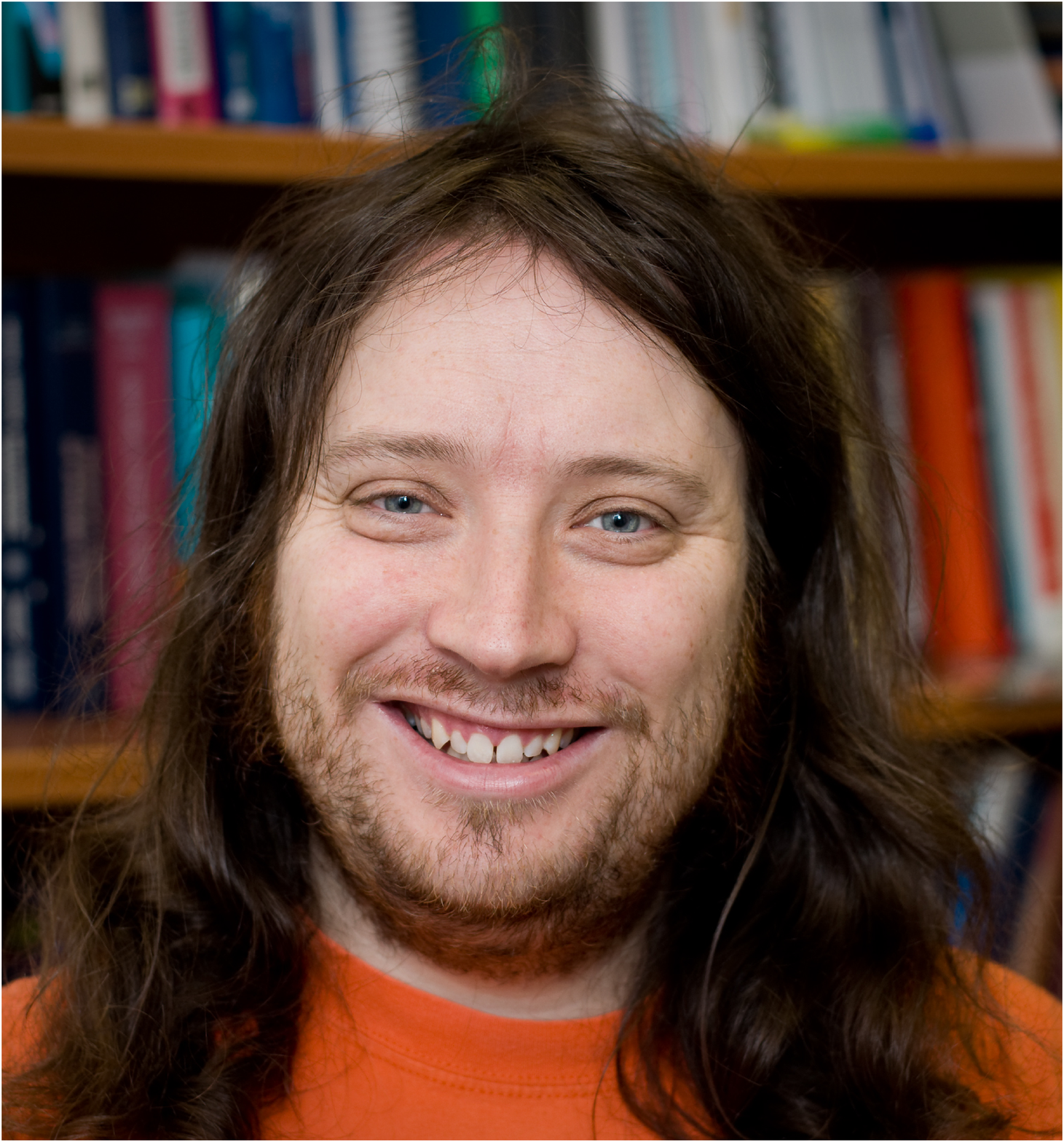}
\end{wrapfigure}
\textbf{David Malone} received B.A. (mod), M.Sc. and Ph.D. degrees in mathematics from Trinity College Dublin. During his time as a postgraduate, he became a member of the FreeBSD development team. He is currently a SFI Stokes Lecturer at the Hamilton Institute, NUI Maynooth. His interests include mathematics of networks, network measurement, IPv6 and systems administration. He is a coauthor of \emph{O'Reilly's IPv6 Network Administration}.

\end{document}